
\documentclass[aps, 10pt, prl, letterpaper, twocolumn, superscriptaddress]{revtex4-1}

\usepackage{graphicx, xcolor, times, amsmath, amsfonts, amssymb, amsthm, bbm, braket, comment}
\usepackage{hyperref}

\definecolor{darkblue}{RGB}{0,0,127} 
\definecolor{darkgreen}{RGB}{0,150,0}
\hypersetup{
	breaklinks, 
	colorlinks, 
	linkcolor=darkblue,
	citecolor=darkgreen, 
	urlcolor=blue,
	pdftitle={Comparing Experiments to the Fault-Tolerance Threshold}, 
	pdfauthor={Richard Kueng, David M. Long, Andrew C. Doherty, and Steven T. Flammia}
}



\newcommand{\tr}{{\mathrm{Tr}}}
\def\avg#1{\mathinner{\langle{#1}\rangle}}

\def\ket#1{\mathinner{|{#1}\rangle}}

\newcommand{\dnorm}[1]{\| #1 \|_\diamond}

\newcommand{\E}{\mathcal{E}}
\newcommand{\I}{\mathcal{I}}

\newcommand{\half}{\frac{1}{2}}


\newtheorem{corollary}{Corollary}
\newtheorem{theorem}{Theorem}

\newtheorem{proposition}{Proposition}


\begin{document}
\title{Comparing Experiments to the Fault-Tolerance Threshold}

\author{Richard Kueng}
\affiliation{Centre for Engineered Quantum Systems, School of Physics, University of Sydney, Sydney, NSW, Australia}
\affiliation{Institute for Theoretical Physics, University of Cologne, Germany}
\affiliation{Institute for Physics \& FDM, University of Freiburg, Germany}

\author{David M.\ Long}
\author{Andrew C.\ Doherty}
\author{Steven T.\ Flammia}
\affiliation{Centre for Engineered Quantum Systems, School of Physics, University of Sydney, Sydney, NSW, Australia}

\date{\today}

\begin{abstract}
Achieving error rates that meet or exceed the fault-tolerance threshold is a central goal for quantum computing experiments, and measuring these error rates using randomized benchmarking is now routine. However, direct comparison between measured error rates and thresholds is complicated by the fact that benchmarking estimates average error rates while thresholds
reflect worst-case behavior when a gate is used as part of a large computation.
These two measures of error can differ by orders of magnitude in the regime of interest. Here we facilitate comparison between the experimentally accessible average error rates and the worst-case quantities that arise in current threshold theorems by deriving relations between the two for a variety of physical noise sources.
Our results indicate that it is coherent errors that lead to an enormous mismatch between average and worst case, and we quantify how well these errors must be controlled to ensure fair comparison between average error probabilities and fault-tolerance thresholds. 
\end{abstract}

\maketitle

The fault-tolerance threshold theorem is a fundamental result that justifies the tremendous interest in building large-scale quantum computers despite the formidable practical difficulties imposed by noise and imperfections. This theorem gives a theoretical guarantee that quantum computers can be built in principle if the noise strength and correlation are below some threshold value~\cite{Kitaev1997, Aharonov1997, Knill1998b}. 

To make precise statements of threshold theorems, we must quantify the strength of errors in noisy quantum operations. Ideally we would do this in terms of quantities that can be measured in experiments. A standard measure for quantifying errors in quantum gates is given by the \emph{average error rate}, which is defined as the infidelity between the output of an ideal unitary gate $\mathcal{U}$ and a noisy version $\mathcal{E}\mathcal{U}$ with noise process $\mathcal{E}$, uniformly averaged over all pure states,
\begin{equation}
\label{eq:avgerrorrate}
	r(\mathcal{E}) = 1- \int \mathrm{d}\psi\, \langle\psi| \mathcal{E}\bigl(|\psi\rangle\!\langle\psi|\bigr) |\psi\rangle\,.
\end{equation}
This quantity has many virtues: it can be estimated efficiently for any ideal gate $\mathcal{U}$, and in a manner that is independent of state preparation and measurement (SPAM) errors by using the now standard method of randomized benchmarking~\cite{Emerson2005, Knill2008, Magesan2012b, Granade2014}. Recent experimental implementations include~\cite{chow2009,brown2011single,gaebler2012randomized,chow2012,corcoles2013,chow2014implementing,Barends2014,harty2014high,muhonen2015,xia2015}.

The major drawback of using Eq.~(\ref{eq:avgerrorrate}) to quantify gate errors is that it is only a proxy for the actual quantity of interest, the fault-tolerance threshold. This is because $r$ captures average-case behavior for a single use of the gate, while fault tolerance theorems characterize noise in terms of \emph{worst-case} performance when the gate is used repeatedly in a large computation. The importance of this distinction has recently been emphasized by Sanders {\it et al}~\cite{Sanders2015}. For some noise types (such as pure dephasing and depolarizing noise) the worst- and average-case behavior essentially coincide~\cite{Magesan2012}. However for other classes of errors, notably errors in detuning and calibration that lead to over or under rotation, the worst-case behavior is proportional to $\sqrt{r}$ and can be \emph{orders of magnitude} worse than the average in the relevant regime of $r \ll 1$, as we will discuss in more detail below. Thus it is not possible to directly compare a measured value of $r$ to a threshold result. Despite this, experimentalists are increasingly wishing to relate the results of benchmarking experiments to fault tolerance thresholds. There is thus a pressing need for techniques that allow for direct comparison between experimentally measurable error rates and fault-tolerance thresholds. 

In this Letter, we investigate the relationship between worst-case and average-case error for a wide range of error models that are relevant to experiments. Firstly, we show that while closed  form expressions do not typically exist, well-established theoretical techniques of convex optimization are often sufficient to determine the relationship between average-case and worst-case errors for models of physical interest. The details of these computations are largely relegated to the Supplementary Material. Secondly, we study a wide range of error models for one-qubit gates. Our main example is of a one-qubit gate with combined dephasing and calibration error. This allows us to demonstrate the crossover between a regime dominated by dephasing, where average-case and worst-case errors are not too different, and the limit of a unitary noise, where the worst-case error scales like $\sqrt{r}$. We then turn to general bounds on worst-case error, showing that it scales as $\sqrt{r}$ for all unitary errors and that for a wide class of errors it can be accurately estimated in terms of $r$ and a recently introduced measure of how close an error process is to being unitary. Finally, conventional benchmarking experiments contain a lot more information than is required just to extract $r$. We find that this information can often be used to show that the worst-case error has an unfavourable scaling. This is an area that we hope will attract much more study in future.

\paragraph{Fault-tolerance thresholds.}

A wide range of fault-tolerance thresholds have been reported. The value of the threshold depends greatly on the fault tolerant procedures that are used, on the noise model that is assumed, and whether the threshold is determined from (possibly conservative) analytic bounds on the error, or from (possibly optimistic) numerical simulations. We emphasize that the errors that are given in theoretical fault tolerance papers typically refer to some measure of worst-case error. For example the widely known results of Aliferis and collaborators~\cite{Aliferis2006a,Aliferis2007,Aliferis2009} use concatenated error correcting codes and consider a stochastic adversarial noise model that includes all of the noise processes that we will discuss in this paper. These papers find that large-scale quantum computation can be performed for errors below a few times $10^{-4}$, when that error is quantified by a measure of worst-case error such as the diamond distance that we discuss below. For more optimistic noise models and for fault-tolerant protocols such as the widely known surface code approaches, the threshold is around $10^{-2}$ based on numerical simulations of Pauli errors~\cite{Wang2011}. For Pauli noise however there is no significant difference between worst-case and average-case errors~\cite{Magesan2012}. The performance of these schemes in the presence of coherent errors is not yet understood.

It is possible to state a version of the threshold theorem directly in terms of $r$, but given current knowledge the thresholds in these theorems would be roughly the square of current thresholds (around $10^{-8}$ for~\cite{Aliferis2006a,Aliferis2007,Aliferis2009}). It is unclear if this can be significantly improved upon since it may be that it is the worst-case error that is physically relevant to the success of the computation. However, our results here motivate research into whether current fault tolerance results could be strengthened to provide significantly improved thresholds when expressed in terms of $r$ for error models sufficiently general to include coherent errors.

\paragraph{Diamond distance.}

We will now describe the most commonly used metric of worst-case error for quantum processes. Any candidate measure of distance $\Delta(\mathcal{E},\mathcal{F})$ between noise operations $\mathcal{E}$ and $\mathcal{F}$ should satisfy certain desirable properties~\cite{Gilchrist2005}. (The operation $\mathcal{F}$ should be thought of as a perfect identity gate for our purposes.) First, like any good distance measure it should have the structure of a metric, which in particular means it should be symmetric, positive, and obey the triangle inequality. Less obviously, but even more importantly, it should obey two additional properties: \emph{chaining} and \emph{stability}. The chaining property,
\begin{equation}
\label{eq:chaning}
	\Delta(\mathcal{E}_2\mathcal{E}_1,\mathcal{F}_2\mathcal{F}_1) \le \Delta(\mathcal{E}_1,\mathcal{F}_1)+\Delta(\mathcal{E}_2,\mathcal{F}_2)\,,
\end{equation}
says that composing two noisy operations cannot amplify the error by more than the sum of the two individual errors. Thus, errors can grow at most linearly in the number of operations. The stability property states that the error metric for a single gate should be independent of whether that gate is embedded in a larger computation. So we require
\begin{equation}
\label{eq:stability}
	\Delta(\mathcal{I}\otimes\mathcal{E},\mathcal{I}\otimes\mathcal{F}) = \Delta(\mathcal{E},\mathcal{F})\,,
\end{equation}
where $\mathcal{I}$ is the identity operation. This ensures that our measure is robust even if the input to the gate is entangled with other qubits in the computation. 

The diamond distance, whose formal definition is
\begin{equation}
\label{eq:diamond}
	D(\mathcal{E},\mathcal{F}) = \tfrac{1}{2} \max_{\rho} \|\mathcal{I}\otimes\mathcal{F}(\rho) - \mathcal{I}\otimes\mathcal{E}(\rho)\|_1\,,
\end{equation}
satisfies each of these physically motivated desiderata~\cite{Kitaev1997}. It also has an appealing operational interpretation as the maximum probability of distinguishing the output of the noisy gate from the ideal output~\cite{Kitaev1997, Helstrom1967}. It is not obvious from the definition how to do practical computations with this quantity, but it can be computed efficiently using the methods of semidefinite programming~\cite{Watrous2009, Ben-Aroya2010, Watrous2013}. Because of these properties, the diamond distance is an ideal measure for quantifying noise for the purposes of a fault-tolerance threshold, although in principle other quantities could be employed as well~\cite{Aharonov1997}. 

The only drawback of this quantity is that it is not known how to measure it directly in experiments. It is therefore of interest to have a conversion to, or at least bounds for, diamond distance in terms of the average gate fidelity. To date, the best known bounds for a $d$-level quantum gate are~\cite{Wallman2014}
\begin{equation*}
	\tfrac{d+1}{d}r \le D \le \sqrt{d (d+1) r}\,,
\end{equation*}
but it is unknown for what conditions these bounds are tight.  

\paragraph{Single-qubit calibration and dephasing errors.}

In order to discuss the relationship between average-case and worst-case errors in quantum computing demonstration experiments we will now analyze in detail a simple but physically relevant noise model for a single-qubit gate. Suppose that the gate is implemented by the noisy control Hamiltonian $H_c=J(t) \sigma_z$. Due to experimental imperfections the control $J(t)$ that is implemented is distinct from the nominal control $J_0(t)$ that would perfectly implement the required gate. Physically, this noise results in two distinct types of errors: \emph{dephasing}, where $\delta J(t)=J-J_0$ varies stochastically between uses of the gate, and \emph{calibration error} where $\delta J$ takes the same fixed value each time the gate is used. Where $\delta J(t)$ is stochastically varying we assume that the noise level does not change with time, and that that the noise spectrum for $\delta J(t)$ is mainly confined to frequencies $f>1/t_g$, where $t_g$ is the time required to perform the gate. When averaged over uses of the gate the resulting noisy operation is $\mathcal{E}\mathcal{U}$ where $\mathcal{U}$ is the desired gate and the noise process amounts to
\begin{equation}
\label{eq:CD}
	\mathcal{E}(\rho) = p \sigma_z e^{-i\delta \sigma_z}\rho e^{i\delta \sigma_z}\sigma_z + (1-p)e^{-i\delta \sigma_z}\rho e^{i\delta \sigma_z}.
\end{equation}
In this noise model the dephasing noise rate $p$ arises from the time-varying noise on the gate, while the unitary over rotation $\delta$ results from the fixed miscalibration of the control pulse $J(t)$. (Although we speak here in terms of calibration errors, this also approximately captures the effects of highly non-Markovian errors arising from very low-frequency noise in $J(t)$.) 

This noise model roughly captures many experimental gates, but more importantly it will demonstrate the range of behaviors that can be expected in terms of the relationship between average-case and worst-case error. Specifically when $\delta=0$ we have a pure dephasing process. For such errors~\cite{Magesan2012} the worst case error scales like $r$, so this is the most favorable possible behavior. On the other hand for $p=0$ we have a purely unitary rotation error that has the worst possible behavior, where the worst-case error scales like $\sqrt{r}$.

Using well-known techniques~\cite{Nielsen2002, Horodecki1999a} we find the average error rate for this calibration and dephasing (CD) noise to be $r_{\text{CD}} = \tfrac{2}{3}\left(p\cos(2\delta)+\sin^2\delta\right).$ Employing the semidefinite programming approach of Refs.~\cite{Watrous2009, Magesan2012}, we can evaluate the diamond distance for this noise channel and find $D_{\text{CD}} = \sqrt{\tfrac{3}{2}r_{\text{CD}}-p(1-p)}.$ A logarithmic plot of the ratio $D_{\text{CD}}/r_{\text{CD}}$ is shown in \autoref{fig:2dplot}.

\begin{figure}[t!]
\begin{center}
\includegraphics[width=\columnwidth]{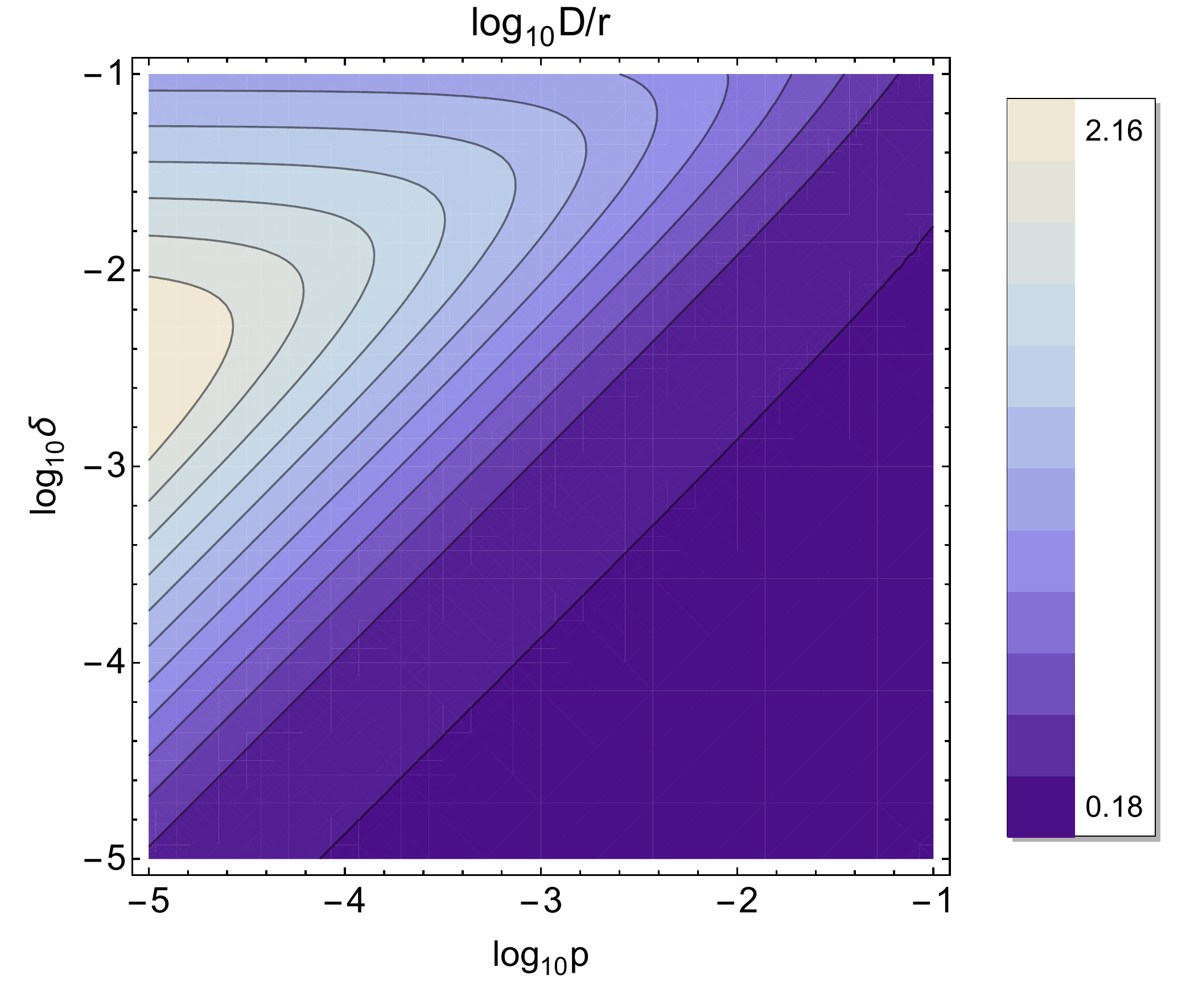}
\caption{Average error rate $r$ and worst-case error rate (diamond distance) $D$ for a combination of dephasing and unitary errors. The logarithmic plot is of $D/r$, which quantifies how much greater the worst-case error is than the average case as a function of a unitary over rotation angle $\delta$ and a dephasing probability $p$, where the exact noise process is given in Eq.~(\ref{eq:CD}). When $p\ge\delta$, then $D$ and $r$ are comparable to within a small factor, but as soon as $\delta > p$ then $D$ rapidly becomes much greater than $r$.}
\label{fig:2dplot}
\end{center}
\end{figure}

In the interesting regime of low error we find $r_{\text{CD}}\simeq 2(p+\delta^2)/3$, while $D_{\text{CD}}\simeq \sqrt{p^2+\delta^2}$. From this we can see that when $p\gg |\delta|$ we have $D_{\text{CD}}\simeq 3r_{\text{CD}}/2$, as for a pure dephasing process, and there is no great difference between worst-case and average-case errors. But as the calibration error grows, the worst-case error grows significantly. When calibration error dominates, $|\delta|\gg p$, we find $D_{\text{CD}}\simeq \sqrt{3r_{\text{CD}}/2}$. In this regime an average error rate $r_{\text{CD}}$ of around $10^{-4}$ corresponds to a more than one percent worst-case error. Physically then, we see that as dephasing error is reduced in a particular experimental setting, this places more stringent demands on the calibration required if the average error rate $r$ is to be compared directly to a fault-tolerance threshold.

\paragraph{Single-qubit relaxation errors.}

Another natural single-qubit noise process to consider is qubit relaxation or amplitude damping errors (spontaneous emission or a $T_1$ process in NMR language), at finite temperature. In this process a qubit with energy splitting $E$ is coupled to a bath at temperature $T$. 
Define as in~\cite{Nielsen2010} the probability for a decay process during the action of the gate is $\gamma p$ and the probability to go from the ground to the excited state is $\gamma (1-p)$. 
The ratio of upgoing to downgoing transition rates $p/(1-p) = \exp(-E/k_BT)$ is the Boltzmann factor, which allows us to identify $p=1/2$ as infinite temperature and $p=1$ as zero temperature. 
For this amplitude damping (AD) noise channel we find $r_{\text{AD}} = \bigl(1-\sqrt{1-\gamma}+\gamma/2\bigr)/3$. 
Although we have no closed form expression for the worst-case error for these channels, we have adapted standard techniques in the analysis of semidefinite programs to find the bound $D_{\text{AD}}\leq 3r_{\text{AD}} \max\{p,1-p\}$. Therefore we have a guarantee that the average-case and worst-case errors are not too different. Comparing with a direct evaluation of the semidefinite program we find $D_{\text{AD}}\simeq 3r_{\text{AD}}$ for zero temperature ($p=1$) and low noise $r_{\text{AD}}\ll 1$, so this is the tightest bound possible. In the limit of high temperature $p\rightarrow 1/2$ we approach a dephasing channel and recover the formula $D_{\text{AD}} = 3r_{\text{AD}}/2$. This behavior is illustrated in \autoref{fig:ampdamp}. 

\begin{figure}[t!]
\begin{center}
\includegraphics[width=\columnwidth]{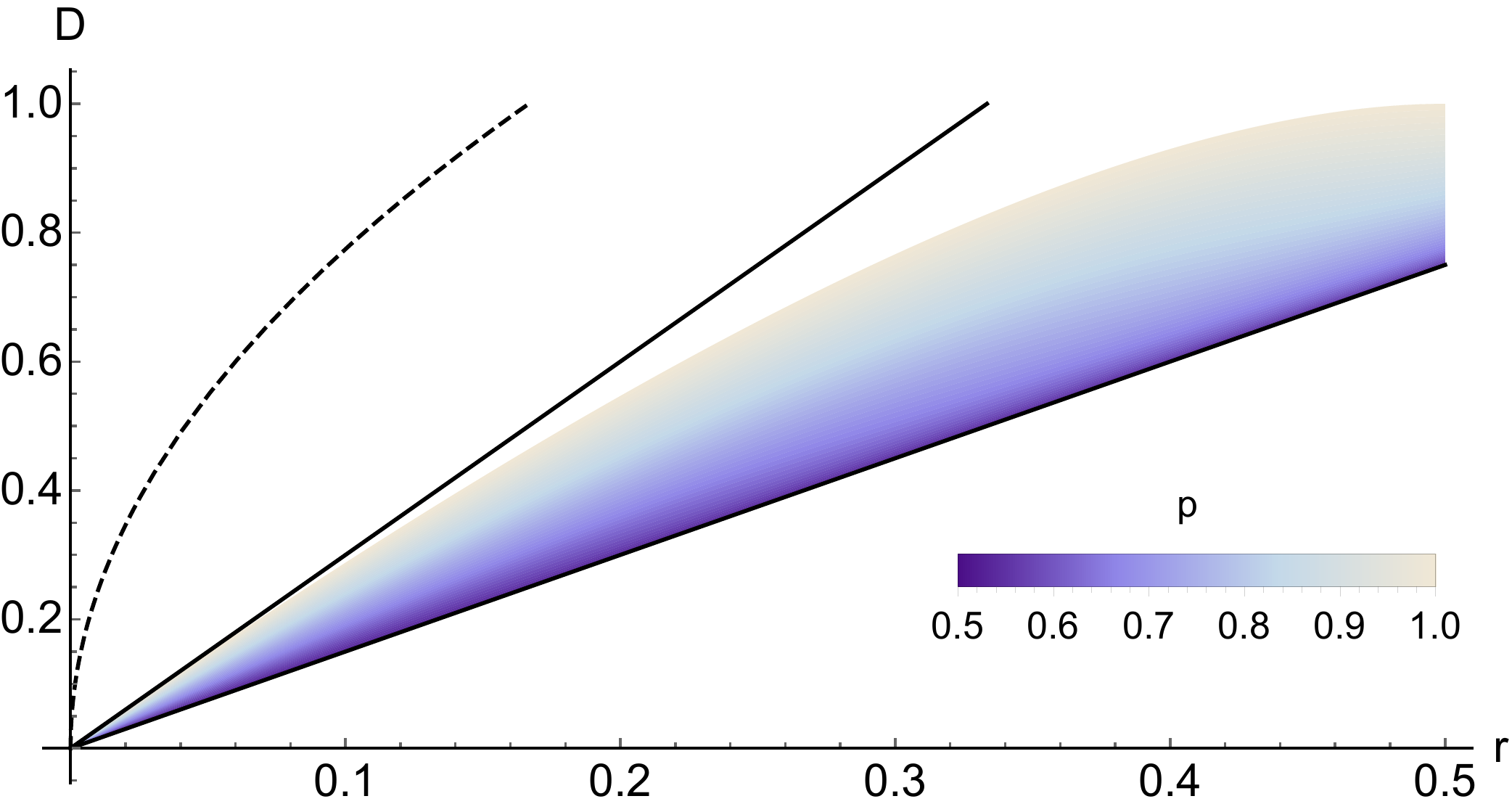}
\caption{Tradeoff between average error rate $r$ and the worst-case error rate in terms of the diamond distance $D$ for the thermal amplitude damping channel, where the parameter $p$ controls the temperature with $p=1$ corresponding to zero temperature and $p=1/2$ corresponding to infinite temperature. The dashed line is the previous best upper bound~\cite{Wallman2014}, while the upper black line is the new bound derived here. The zero-temperature limit ($p=1$) gives the least favorable scaling of $D$ with $r$, but in every case the bound $D \le 3r$ holds. The infinite-temperature limit ($p =1/2$) recovers the known value of $D=1.5 r$.}
\label{fig:ampdamp}
\end{center}
\end{figure}

\paragraph{Leakage errors}

Another important class of errors encountered in experiment is leakage errors. Modified randomized benchmarking protocols for leakage errors are proposed in~\cite{Epstein2014, Wallman2015a}. In Ref.~\cite{Epstein2014} it was shown that a nearly trivial modification of a standard benchmarking protocol in the presence of leakage errors can still be used to determine the average error rate $r$, so we again use this figure of merit for comparison. For a leakage model we need to consider a larger space of states, so we add a leakage level $|l\rangle$ to the two-qubit states $|0\rangle,|1\rangle$. We follow~\cite{Wallman2015a} in distinguishing coherent and incoherent leakage errors and compare the average-case error to the true worst-case error; this will be the diamond distance on the \emph{full} state space including both the leakage and qubit states. Fault-tolerance theorems also exist for leakage error processes~\cite{aliferisterhal2007} and this is the appropriate noise measure to compare with the numerical values found in those papers.

As an example of incoherent leakage (IL) we will consider the case where the qubit state $|1\rangle$ relaxes to $|l\rangle$ with probability $p$. A benchmarking experiment (following~\cite{Epstein2014}) then obtains the average-case error $r_{\text{IL}}=[1-\sqrt{1-p}+p]/3$ where this is now the infidelity averaged over states initially in the qubit subspace. Since this process is so similar to the amplitude damping channel we can use analogous techniques to find the inequality $D_{\text{IL}}\leq 2 r_{\text{IL}}$. Thus for this error process the average-case and worst-case error again almost coincide.

As an example of coherent leakage (CL), consider the unitary noise process $\E_{\text{CL}}(\rho)=U(\delta) \rho U (\delta)^\dagger$ given by $U (\delta) =\exp[-i\delta(|1\rangle\langle l|+|l\rangle\langle 1|)]$. For this noise process one obtains $r_{\text{CL}}=[1-\cos\delta -\cos^2\delta]/3$. However, as for the unitary errors discussed above, the worst-case error can be much larger than this. We find $\sqrt{3r_{\text{CL}}/2}\leq D_{\text{CL}} = |\sin\delta|\leq \sqrt{2 r_{\text{CL}}}$ for all $\delta \in [-\pi/2,\pi/2]$ and consequently the worst case error scales like $\sqrt{r_{\text{CL}}}$. Where leakage errors are possible, it would be important to use the methods of~\cite{Wallman2015a}, or some other method to bound coherent leakage errors, before comparing the average-case error $r$ to a fault-tolerance threshold.

\paragraph{Unitary errors.}

In looking at these examples we have found that unitary or nearly unitary errors appear to result in the largest difference between average-case and worst-case errors. This is true in general. For unitary errors in a $d$-dimensional space we find the following inequalities
\begin{equation*}
	\sqrt{\tfrac{d+1}{d}} \sqrt{r_{\text{U}}} \leq  D_{\text{U}}
	\leq \sqrt{(d+1)d}\sqrt{r_{\text{U}}}.
\end{equation*}
Thus any unitary error has a worst-case error scaling like $\sqrt{r_{\text{U}}}$.

\paragraph{A general inequality.}

For a large and important class of noise processes, the worst-case error can be \emph{directly} estimated from benchmarking-type data without side information about the type of error, which generally requires doing full quantum process tomography~\cite{Chuang1997}, or one of its SPAM-resistant variants~\cite{Kimmel2014, Blume-Kohout2013}. In Ref.~\cite{Wallman2015} a quantity called the unitarity $u(\E)$ of a noise process $\E$ was defined (see the Supplementary Material for a precise definition), and it was shown that this can be estimated efficiently and accurately using benchmarking. We find that for all unital noise (i.e.\ noise where the maximally mixed state is a fixed point) with no leakage, the unitarity and the average error rate together give a characterization of the worst-case error via the inequality~\footnote{A similar bound was recently derived independently by J.\ Wallman~\cite{Wallman2015b}.}
\begin{equation}
\label{eq:inequality}
	c_d \sqrt{u + \frac{2 d r}{d-1} -1 } 
	\leq D
	\leq d^2 c_d \sqrt{u + \frac{2dr}{d-1} -1 }\,,
\end{equation}
where $c_d = \frac{1}{2}(1-\frac{1}{d^2})^{1/2}$. Since the unitarity generally obeys the inequality $u \ge (1-dr/(d-1))^2$ (see Ref.~\cite{Wallman2015}) we find (for unital noise without leakage) that the worst-case error scaling matches the average-case \emph{if and only if} $u = 1-2dr/(d-1)+O(r^2)$. 

To illustrate the power of Inequality \eqref{eq:inequality}, we immediately find that for the single-qubit calibration and dephasing noise model, the condition $1-u_{\text{CD}} = 4r_{\text{CD}} + O(r_{\text{CD}}^2)$ is both necessary and sufficient to recover the favorable linear scaling between the worst- and average-case errors. In fact, the worst-case error for this channel can be expressed directly in terms of the unitarity as $D_{\text{CD}} = \sqrt{\frac{3}{2}r_{\text{CD}}-\frac{3}{8}(1-u_{\text{CD}})}$. And because the unitarity can be estimated from a benchmarking-type experiment, this gives direct experimental access to worst-case errors for this family of noise models without the need for expensive tomographic methods. 

Moreover, Inequality~\eqref{eq:inequality}  allows us to get insights into generalizing our conclusions for single-qubit models to few-qubit systems such as those required for entangling quantum gates. A natural generalization of our CD model to two-qubit calibration and dephasing errors would be an independent dephasing rate $p$ on each qubit and unitary noise given by $\mathrm{e}^{iH_{\text{CD2}}}$ where $H_{\text{CD2}} = \delta_1 \sigma_z^{(1)} + \delta_2 \sigma_z^{(2)} + \epsilon \sigma_z^{(1)}\sigma_z^{(2)}$.
The semidefinite programming approach is possible here, but becomes unwieldy because there are so many free parameters. However, both the average error rate and the unitarity are readily computed as in the appendix.
Inequality \eqref{eq:inequality} then allows one to easily and generally explore the tradeoffs in the calibration accuracy of the $\delta$ and $\epsilon$ parameters such that the overall error remains roughly consistent between average and worst case. Furthermore, since $u_{\text{CD2}}$ can be measured efficiently in a benchmarking experiment, large values of $u$ can be used to herald that an experiment has left the favorable scaling regime and more characterization and calibration must be done.

\paragraph{Conclusion and Outlook.}

We have seen that many realistic noise processes admit a linear relation between the average error rate (which is accessible experimentally) and the worst-case error (which is the relevant figure of merit for fault tolerance). 
The exceptions to this rule are highly coherent errors, where the worst-case error scales proportionally to the square root of the average error rate. 

While our methods and results are very general, there are noise sources that we have not tried to fit into our error taxonomy. Errors such as crosstalk~\cite{Gambetta2012}, time-dependent or non-Markovian noise~\cite{Fogarty2015, Ball2015} should be amenable to these methods, however, and extending our results to cover such noise is an important avenue for future work.

Finally, we reiterate that it is an interesting open question if it is possible to prove a fault-tolerance threshold result directly in terms of $r$ without the lossy conversion to $D$. Fault-tolerant circuits are not perfectly coherent since measuring error syndromes necessarily removes certain coherences, and this may provide an avenue to develop stronger theorems.

\paragraph{Acknowledgments.}

We thank R.\ Blume-Kohout, D.\ Gottesman, J. Gambetta, C.\ Granade and J.\ Wallman for discussions.
This work was supported by the ARC via EQuS project number CE11001013, 
by IARPA via the MQCO program, and by 
the US Army Research Office grant numbers W911NF-14-1-0098 and W911NF-14-1-0103. 
STF acknowledges support from an ARC Future Fellowship FT130101744.

\bibliography{worst_vs_avg}

\clearpage
\onecolumngrid
\fontsize{11}{13.6}\selectfont 

\section{Supplementary material}

\subsection{Quantum states and operations}

A $d$-level quantum system is fully characterized by its is density operator $\rho$, which is a Hermitian, positive semidefinite $d \times d$ matrix obeying $\tr(\rho) = 1$.
A \emph{quantum operation} or \emph{channel} $\E$ is a completely positive linear map from density operators to density operators~\cite{Nielsen2010, Watrous2011}.

There are a number of representations of a completely positive operator, each of which is useful for different purposes. The most well known is the representation in terms of Kraus operators. These are a set of operators $\{K_i\}$ 
that encapsulate the channel's action via $\E (\rho) = \sum_i K_i \rho K_i^\dagger$.
Moreover, $\sum_i K_i^\dagger K_i \leq I$ holds, where $I$ is the identity matrix, and equality occurs when $\E$ is trace preserving. 

Other representations include the Liouville operator $L(\E) =  \sum_i \overline{K_i} \otimes K_i$ where $\otimes$ denotes the tensor product. The Liouville operator is also known as the transition matrix, or natural representation. It is a matrix that acts on the vector obtained by stacking the columns of $\rho$, which we denote $|\rho)$ as in~\cite{Wallman2014}, in the same way that $\E$ acts on the density operator $\rho$. That is $L(\E)|\rho) = |\E(\rho))$.

Lastly, we will have cause to use the Choi-Jamio\l{}kowski matrix of a quantum operation $\E$, $J(\E) = d (\I_A~\otimes~\E_B)(| \psi_{\mathrm{Bell}} \rangle \langle \psi_{\mathrm{Bell}} |)$. Here $\I$ is the identity channel and $| \psi_{\mathrm{Bell}} \rangle = \frac{1}{\sqrt{d}}\sum_{j=1}^d \ket{j}\otimes\ket{j}$ is the maximally entangled state between systems $A$ and $B$ (this definition differs by a factor of $d$ to that in~\cite{Wallman2014}, instead we use the definition found in~\cite{Magesan2012, Watrous2011} so as to be consistent with the semidefinite program in~\cite{Watrous2009}, which would otherwise require minor modification). It can be computed from the Kraus operators $\{K_i\}$ with the formula $J(\E) = \sum_i |K_i)(K_i|$ (where $(K_i| = \bar{|K_i)}^T$).

This representation is useful because, unlike the other representations mentioned here, $J(\E)$ is positive semidefinite for any completely positive quantum operation (the Kraus operators and Liouville operator need not even have a complete set of eigenvectors). 
			
We will be interested in relating the average infidelity $r(\E)$ to the diamond distance $D(\E)$ as defined in the main text in Eqs.~(\ref{eq:avgerrorrate}) and (\ref{eq:diamond}), respectively. (We will always be comparing a noise process to the identity channel, so we write the diamond distance with only one argument for brevity.) A useful formula is provided by the following relation which is a generalization of the main results in~\cite{Horodecki1999a, Nielsen2002} to completely positive maps that are not necessarily trace preserving.

\begin{proposition} 
\label{prop:infid_formula}
Let $\mathcal{E}$ be a completely positive (but not necessarily trace preserving) map with Liouville representation $L(\mathcal{E})$. Then
\begin{equation}
\label{eqn:infid_formula}
	F_\mathrm{avg}(\E) = \frac{\tr[L(\E)]+\tr[\E(I)]}{d(d+1)},
\end{equation}
where $F_\mathrm{avg} (\E) = 1-r(\E)$ is the average fidelity and $d$ is the system size.
\end{proposition}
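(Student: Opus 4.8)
The plan is to compute the defining integral for $F_\mathrm{avg}(\E)$ directly, by expanding $\E$ in a set of Kraus operators $\{K_i\}$ and reducing everything to a single Haar second-moment integral over pure states. Writing $\E(\proj{\psi}) = \sum_i K_i \proj{\psi} K_i^\dagger$, I would first observe that $\langle\psi|\E(\proj{\psi})|\psi\rangle = \sum_i |\langle\psi|K_i|\psi\rangle|^2$, so that $F_\mathrm{avg}(\E) = \sum_i \int \mathrm{d}\psi\,|\langle\psi|K_i|\psi\rangle|^2$. This isolates the one nontrivial ingredient: the integral $I(A) = \int \mathrm{d}\psi\,|\langle\psi|A|\psi\rangle|^2$ for an arbitrary operator $A$.

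To evaluate $I(A)$, I would rewrite the integrand as a second moment, $|\langle\psi|A|\psi\rangle|^2 = \tr\bigl[(A\otimes A^\dagger)(\proj{\psi})^{\otimes 2}\bigr]$, and use the standard fact that the uniform average of $(\proj{\psi})^{\otimes 2}$ is the normalized projector onto the symmetric subspace of $\mathbb{C}^d\otimes\mathbb{C}^d$, namely $\int \mathrm{d}\psi\,(\proj{\psi})^{\otimes 2} = \frac{1}{d(d+1)}(I + \mathbb{F})$, where $\mathbb{F}$ is the swap (flip) operator. Applying the elementary trace identities $\tr[(A\otimes A^\dagger)(I\otimes I)] = |\tr A|^2$ and $\tr[(A\otimes A^\dagger)\mathbb{F}] = \tr[A A^\dagger]$ then gives $I(A) = \bigl(|\tr A|^2 + \tr[A A^\dagger]\bigr)/\bigl(d(d+1)\bigr)$.

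The final step is to resum over the Kraus index and recognize the two resulting sums as the advertised traces. For the first, using $L(\E) = \sum_i \overline{K_i}\otimes K_i$ together with $\tr[\overline{K_i}] = \overline{\tr K_i}$, I get $\tr[L(\E)] = \sum_i |\tr K_i|^2$. For the second, $\tr[\E(I)] = \sum_i \tr[K_i K_i^\dagger]$ follows directly from the Kraus form. Substituting both into the sum yields Eq.~(\ref{eqn:infid_formula}). Because the right-hand side is expressed purely through $L(\E)$ and $\E(I)$, the result is manifestly independent of the choice of Kraus representation, as it must be.

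I expect no deep obstacle here, since the argument is at heart a single second-moment computation. The one point that requires care---and the only place the generalization to non-trace-preserving maps enters---is that one must \emph{not} collapse $\sum_i \tr[K_i K_i^\dagger]$ to $\tr[I] = d$, as is done in the usual Horodecki--Nielsen derivation for channels satisfying $\sum_i K_i^\dagger K_i = I$. Retaining this sum as $\tr[\E(I)]$ is precisely what extends the formula beyond trace-preserving maps. Beyond that, the only bookkeeping to get right is the $1/(d(d+1))$ normalization of the symmetric-subspace projector and the complex conjugation appearing in the Liouville representation.
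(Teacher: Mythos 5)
Your proof is correct, but it takes a genuinely different route from the paper's. You evaluate the Haar average directly: reduce $F_\mathrm{avg}(\E)$ to the single integral $I(A)=\int \mathrm{d}\psi\,|\langle\psi|A|\psi\rangle|^2$ per Kraus operator, compute it with the second-moment formula $\int \mathrm{d}\psi\,(\proj{\psi})^{\otimes 2}=\frac{1}{d(d+1)}(I+\mathbb{F})$ and the swap-trace identities, and resum to recognize $\tr[L(\E)]=\sum_i|\tr K_i|^2$ and $\tr[\E(I)]=\sum_i\tr[K_iK_i^\dagger]$; all of these steps check out, including the key point that $\sum_i\tr[K_iK_i^\dagger]$ must be kept as $\tr[\E(I)]$ rather than collapsed to $d$. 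The paper instead follows Nielsen's twirling argument: it shows that both sides of the formula are invariant under twirling $\E\mapsto\E_T(\rho)=\int\mathrm{d}U\,U^\dagger\E(U\rho U^\dagger)U$, invokes Schur's lemma to conclude that any twirled completely positive map has the depolarizing form $\mathcal{D}_{p,q}(\rho)=p\rho+q\tr(\rho)I$, and then verifies the identity explicitly for $\mathcal{D}_{p,q}$. Your approach is shorter and more self-contained---it needs only one standard representation-theoretic fact (the symmetric-subspace projector) rather than a twirl-invariance argument plus a Schur-lemma classification, and it makes the independence from the choice of Kraus representation manifest. The paper's route, on the other hand, stays closer to the structure of the randomized-benchmarking literature it generalizes (Horodecki, Nielsen), and the intermediate fact it establishes---that the twirl of any completely positive map is of depolarizing form, with the trace-preservation hypothesis dropped---is of independent use in that context. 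Both proofs use the same underlying second-moment information about the Haar measure; they just package it differently.
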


Note that this formula covers the main results in~\cite{Horodecki1999a,Nielsen2002} as a special case. 
Indeed, any trace preserving map obeys $\tr \left( \E (I) \right) = d$ and Eq.~\eqref{eqn:infid_formula} reduces to~\cite{Horodecki1999a}[Proposition 1] and~\cite{Nielsen2002}[Equation (3)], respectively. 
For the scope of our work, such a generalization is very useful, since it will allow us to evaluate the fidelity of leakage processes averaged over qubit states.

\begin{proof}[Proof of  \autoref{prop:infid_formula}]

One way of proving the generalized formula \eqref{eqn:infid_formula} is to follow Nielsen's simplified proof steps ~\cite{Nielsen2002} of the original formula~\cite{Horodecki1999a} without assuming that $\mathcal{E}$ is trace preserving. At the core of this proof is the fact that the average fidelity
is invariant under twirling, i.e.\ $F_{\mathrm{avg}} \left( \E \right) = F_{\mathrm{avg}} \left( \E_T \right)$
for $\E_T (\rho) := \int \mathrm{d} U U^\dagger \E \left( U \rho U^\dagger \right) U^\dagger$.
Here $\mathrm{d}U$ denotes the unique unitarily invariant (Haar) measure over the unitary group $U(d)$ normalized to one ($\int \mathrm{d} U = 1$). 
The same is true for the r.h.s.\ of Eq.~\eqref{eqn:infid_formula}.
Indeed, suppose that $\E$ has Kraus representation $\E (\rho) = \sum_i K_i \rho K_i^\dagger$. 
Twirling it results in the map 
$\E_T (\rho) = \int \mathrm{d} U U^\dagger \sum_i (K_i U \rho U^\dagger K_i^\dagger) U$ whose Liouville representation obeys
\begin{align*}
\tr \left( L \left( \E_T \right) \right)
=& \tr \left( \int \mathrm{d}U \sum_i \bar{U}^\dagger \bar{K}_i \bar{U} \otimes U^\dagger K_i U \right)
= \int \mathrm{d} U \sum_i \tr \left( \bar{U}^\dagger \bar{K}_i \bar{U} \right) \tr \left( U^\dagger K_i U \right) \\
=& \sum_i \tr \left( \bar{K}_i \right) \tr \left( K_i \right) \int \mathrm{d} U = \tr \left( \sum_i \bar{K}_i \otimes K_i \right) = \tr \left( L \left( \E \right) \right).
\end{align*}
Also
\begin{equation*}
\tr \left( \E_T (I) \right) = \int \mathrm{d}U \tr \left( U^\dagger \E \left( U I U^\dagger \right) U \right)
= \tr \left( \E \left( I \right) \right)
\end{equation*}
which establishes twirl invariance of the r.h.s.\ of \eqref{eqn:infid_formula}.
As a result, it suffices to establish the claimed equality for twirled maps only.
However, due to Schur's Lemma, every twirl of a completely positive map is proportional to 
a depolarizing operation 
\begin{equation}
\E_T (\rho) =\mathcal{D}_{p,q} (\rho) :=  p \rho + q \tr (\rho) I\quad \forall \rho \label{eq:infid_aux1}
\end{equation}
with parameters $p,q$ that may depend upon the original map $\E$. 
Nielsen~\cite{Nielsen2002} established this by using the following elementary argument
based on the observation that any twirled channel obeys
\begin{equation}
V \E_T \left( \rho \right) V^\dagger = \E_T \left( V \rho V^\dagger \right) \quad \forall V \in U(d), \; \forall \rho \label{eqn:infid_aux1}
\end{equation}
which is readily established by direct computation.
Now let $ X = |x \rangle \langle x|$ be a rank one projector, set $X^\perp = I - X$ 
and let $V$ be an arbitrary unitary operator obeying $V X V^\dagger = X$. 
Inserting these particular choices into \eqref{eqn:infid_aux1}
reveals $\E_T ( X) = \E_T \left( V X V^\dagger \right) = V \E_T \left( X \right) V^\dagger$
which in turn implies $\E_T (X) = (p+q) X + q X^\perp = p X + q I$ for some $p,q \in \mathbb{R}$. 
A priori, the parameters $p,q$ may depend on the choice of $X$, but \eqref{eqn:infid_aux1} implies that they are actually the same for any choice of $X$. 
From this, Formula \eqref{eq:infid_aux1} is readily deduced, e.g.\ by inserting eigenvalue decompositions $\rho = \sum_{i=1}^d \lambda_i |x_i \rangle \langle x_i|$ of arbitrary density operators and exploiting linearity.

As a result, it suffices to establish Formula \eqref{eqn:infid_formula} exclusively for depolarizing maps $\mathcal{D}_{p,q}$ of the form \eqref{eq:infid_aux1} with parameters $p,q$. 
Noting that such a map has Liouville representation
$L \left( \mathcal{D}_{p,q} \right) = p I \otimes I + q d | \psi_{\mathrm{Bell}} \rangle \langle \psi_{\mathrm{Bell}}|$, where $| \psi_{\mathrm{Bell}} \rangle = \frac{1}{\sqrt{d}}\sum_{i=1}^d |i \rangle \otimes | i \rangle$ denotes a maximally entangled state, and calculating
\begin{align*}
F_{\mathrm{avg}} \left( \mathcal{D}_{p,q} \right)
=& p \int \mathrm{d} \psi \langle \psi | \psi \rangle \langle \psi | \psi \rangle + 
q \int \mathrm{d} \psi \tr \left( | \psi \rangle \langle \psi | \right) \langle \psi | I | \psi \rangle 
= p + q, \\
\tr \left( L \left( \mathcal{D}_{p,q} \right) \right)
=& p \tr \left( I \otimes I \right) + q d \tr \left( | \psi_{\mathrm{Bell}} \rangle \langle \psi_{\mathrm{Bell}}| \right) = d^2 p + d q, \\
\tr \left( \mathcal{D}_{p,q} \left( I \right) \right)
=& p \tr \left( I \right) + q \tr \left( I \right)^2 = d p + d^2 q
\end{align*}
reveals
\begin{equation*}
\tr \left( L \left( \mathcal{D}_{p,q} \right) \right) + \tr \left( \mathcal{D}_{p,q} \right)
= (d+1)d(p+q) = (d+1)d F_{\mathrm{avg}} \left( \mathcal{D}_{p,q} \right),
\end{equation*}
thus establishing the desired statement.
\end{proof}

\subsection{Semidefinite Programming}
\label{subsec:semidef}

It is possible to efficiently calculate the diamond norm of a linear operator through the use of a \emph{semidefinite program} if a full description of the channel is known~\cite{Watrous2009, Ben-Aroya2010, Watrous2013}.

A semidefinite program (SDP) is a form of mathematical optimization problem (specifically a convex optimization problem; see~\cite{Boyd2004, Vandenberghe1996} for a review). A mathematical optimization problem is very generally a specification of some objective function to be maximized (or minimized), subject to some constraints on allowed variables in the form of inequalities involving constraint functions. This can be stated in the form
\begin{center}
	\begin{tabular}{l l}
		Maximize: & $f_0(z)$ \\
		Subject to: & $f_i(z) \leq b_i$, \quad $i = 1,...,m$. \\ 
	\end{tabular}
\end{center}
where $f_0$ is the objective function, the $f_i$'s and $b_i$'s encode the constraint functions, and $z$ is the variable to be changed so as to maximize $f_0(z)$. Any value of $z$ which meets the constraints of the problem is called feasible. In some contexts these problem specifications are called programs.

A convex optimization problem is a mathematical optimization problem in which the set of all feasible points is a convex set and the objective function to be maximized is concave, i.e.\ it satisfies $f(\tau~x~+~(1-\tau) y) \geq \tau~f~(x)~+~(1-\tau)~f (y)$ for any $\tau \in [0,1]$ and feasible $x,y$. 
Note that minimising a convex function $f_0$ over a convex set also fits this framework, because it is equivalent to maximising $(-f_0)$ which is concave. 
Concave functions have many desirable properties that render convex optimization tasks easier than general optimization problems (e.g.\ concavity assures that any local maximum is also a global maximum)~\cite{Barvinok2002}.

Finally, a semidefinite program is a particular instance of a convex optimization problem where one aims to maximize a linear function (which is both concave and convex) over a convex subset of the cone of positive semidefinite matrices~\cite{Barvinok2002}. 
This cone induces a partial ordering on the space of all hermitian $d \times d$ matrices.
Concretely, we write $X \geq Y$ if and only if $X - Y$ is positive semidefinite.
With this notational convention, every SDP is of the form
\begin{equation}
	\begin{tabular}{l l}
		Maximize: & $\tr \left( C X \right)$ \\
		Subject to: & $ \Xi (X) \leq B$, \\
			& $ X \geq 0$\,.
	\end{tabular}
	\label{eqn:primal}
\end{equation}
and is specified by a triple $(\Xi,B,C)$: $B$ and $C$ are hermitian matrices (not necessarily of the same dimensions) and $\Xi$ is a linear map between these matrices spaces.
An SDP of the form \eqref{eqn:primal} is called a \emph{primal program}.
In a geometric sense, the problem here is to move as far along the direction of $C$ as possible, while remaining inside the convex region specified by the matrix inequalities~\cite{Boyd2004,Vandenberghe1996,Barvinok2002}.
A wide variety of problems can be cast in terms of semidefinite programs and efficient methods are known that can solve them. Thus, finding an expression for a problem in terms of a semidefinite program reduces it to one in which the solution is easily found numerically, and sometimes even analytically. 

Attached to every primal problem is another semidefinite program \eqref{eqn:primal}, known as its \emph{dual program}.
In a sense, it corresponds to a reverse problem and is given by
\begin{equation}
	\begin{tabular}{l l}
		Minimize: & $\tr \left( Z B \right)$ \\
		Subject to: & $\Xi^* (Z) \geq C$  \\
		& $Z \geq 0$, \\ 
	\end{tabular}
	\label{eqn:dual}
\end{equation}
which is again completely specified by the triple $(\Xi,C,B)$.  Here, $\Xi^*$ denotes the adjoint map of $\Xi$ with respect to the trace-inner product, i.e.\ the unique map obeying $\tr \left( \Xi^* (Z) X \right) = \tr \left( Z \; \Xi (X) \right)$ for all hermitian matrices $X$ and $Z$. 

Primal and dual SDP's are intimately related. In particular they have the property that any feasible value of the primal objective $\tr (C X )$ is less than or equal to any feasible value of the dual objective $\tr (Z B )$. Using the fact that positive semidefinite matrices $A,B,C \geq 0$ obey $\tr (A B) \leq \tr (A C)$ if and only if $B \leq C$ allows for an easy proof of this feature~\cite{Barvinok2002} via
\begin{equation*}
	\tr \left( C X \right) \leq \tr \left( \Xi^* (Z) X \right) = \tr \left( Z \; \Xi (X) \right)
	\leq \tr \left( Z B \right),
\end{equation*}
where we also have employed the constraints in \eqref{eqn:dual} and \eqref{eqn:primal}, respectively.
This result is known as \emph{weak duality}.  Typically an even stronger relation -- called \emph{strong duality} -- is true, namely that the optimum values of both problems coincide. 

Weak duality allows us to find an upper bound for the optimum value of \eqref{eqn:primal} in the form of any feasible value of \eqref{eqn:dual}. 
To be more explicit, if $Z$ is feasible, then $\tr (Z B)$ must be  larger than or equal to any feasible $\tr (C X)$. 
This in particular includes the maximal value $\tr (C X^\sharp)$ of \eqref{eqn:primal}.
However, since $\tr (C X^\sharp )$ is maximal, it is by definition larger than or equal to any feasible value of $\tr (C X)$. 
Consequently, the feasible values $\tr (C X )$ and $\tr (Z B)$ certify that the optimum primal value $\tr (C X^\sharp)$ is in a certain range. These bounds are said to be \emph{certificates}. 
Throughout this work, we will employ such certificates in order to find bounds for the diamond norm.
What is more, if we can find a pair of feasible points $X,Z$ that obey $\tr (C X ) = \tr (Z B)$, then weak duality dictates that we have analytically found the optimum value for the program. 
We will also appeal to this fact later.

\subsection{Semidefinite programs for the diamond distance}

Watrous has provided several characterisations of the diamond distance in terms of semidefinite programs~\cite{Watrous2009, Watrous2013}. We reproduce here a simplified version that can be used when the operator in question is a difference of quantum channels $\Delta=\mathcal{E}-\mathcal{F}$~\cite{Watrous2009}, as this will always be the case for us. Given this condition, the following pair of primal and dual SDP's has an optimal value of $D = \half \dnorm{\Delta}$:
\begin{center}
\begin{tabular}{cc}
\begin{minipage}{7.5cm}
	\begin{equation}
		\begin{tabular}{l l}
			\textbf{Primal problem}
			\vspace{0.5cm} \\
			Maximize: & $\avg{J(\Delta),W}$ \\
	 		Subject to: & $W \leq \rho \otimes I_d$, \\
			& $\tr(\rho) = 1$, \\
	 		& $W \in \mathrm{Pos}(A \otimes B)$, \\
	 		& $\rho \in \mathrm{Pos}(A)$. \\
	 	\end{tabular}
	 	\label{eq:diamond_primal}
	\end{equation}

\end{minipage}
& \hspace{1cm}
\begin{minipage}{7.5cm}
	\begin{equation}
		\begin{tabular}{l l}
	 		\textbf{Dual problem}
			\vspace{0.5cm}\\
	 		Minimize: & $\| \tr_B(Z) \|_\infty$ \\
	 		Subject to: & $Z \geq J(\Delta) $, \\
	 		& $Z \in \mathrm{Pos}(A\otimes B)$. \\
			& \quad \\
			& \quad
	 	\end{tabular}
	 	\label{eq:diamond_dual}
 	\end{equation}
\end{minipage}
\end{tabular}
\end{center}

Here $\avg{X,Y} = \tr(X^\dagger Y)$ is the Hilbert-Schmidt inner product of the matrices $X$ and $Y$, $\mathrm{Pos}(A \otimes B)$ denotes the cone of positive semidefinite operators acting on the system $A \otimes B$ and $\tr_B(X)$ is the partial trace of $X$ over system $B$, i.e.\ the subsystem of $X$ obtained when subsystem $B$ is discarded. 
Also, $\| X \|_\infty$ denotes the operator norm of $X$, which is the maximum eigenvalue of $X$ (if $X \geq 0$). 
Further information on these functions and spaces can be found in~\cite{Nielsen2010, Watrous2011}.

Note that, stated as it is, the primal problem is almost, but not quite, of the primal SDP form introduced in \eqref{eqn:primal}. 
However, some straightforward manipulations allow one to convert this problem into such a standard form.
Perhaps a bit surprisingly, the same is true for the dual problem which can also be recast as an instance of a dual SDP problem~\cite{Watrous2009}.

Finally, note that if $\Pi_+$ is the projector onto the positive eigenspace of $J(\Delta)$, then $\rho = \frac{1}{d}I$, $W = \frac{1}{d} \Pi_+$ are valid primal feasible values and $Z = \Pi_+ J(\Delta) \Pi_+$ is dual feasible. 
These feasible points were identified by Magesan, Gambetta, and Emerson~\cite{Magesan2012}, and inspired by their approach we will use similar constructions of primal and dual feasible points to get bounds on the diamond norm for various noise processes.

\subsection{Dephasing and calibration errors for a single qubit}

The channel described in the main text has Kraus operators $K_0=\sqrt{1-p}U(\delta)$ and $K_1=\sqrt{p}U(\delta)\sigma_z$, where $U(\delta)=\exp(-i\delta \sigma_z)$. Using the formula above for the average fidelity it is straightforward to show that
\begin{equation*}
	r_{\text{CD}} = \frac{2}{3}\bigl[p\cos(2\delta)+\sin^2\delta\bigr]\,.
\end{equation*}
Likewise evaluating the upper and lower bounds on $D_{\text{CD}}$ arising from the primal and dual feasible solutions of Ref.~\cite{Magesan2012} we find them to be equal and so obtain the result
\begin{equation*}
	 D_{\text{CD}}=\half \biggl| 1- (1-2p)e^{2i\delta} \biggr|\,.
\end{equation*}
A simple algebraic manipulation then shows the result claimed in the main text
\begin{equation*}
	D_{\text{CD}} = \sqrt{\frac{3}{2}r_{\text{CD}}-p(1-p)}.
\end{equation*}
			
\subsection{Thermal relaxation of a single qubit}

This one-qubit channel $\E_{\text{AD}}$ is characterized by 2 parameters $p,\gamma \in [0,1]$ and four Kraus operators \cite[Chapter 8.3.5]{Nielsen2010}
\begin{equation*}
	K_0 = \sqrt{p} \begin{pmatrix}
					1 & 0 \\
					0 & \sqrt{1-\gamma}
				\end{pmatrix}, 
	\quad
	K_1 = \sqrt{p} \begin{pmatrix}
					0 & \sqrt{\gamma} \\
					0 & 0
				\end{pmatrix}, 
	\quad
	K_2 = \sqrt{1-p} \begin{pmatrix}
					\sqrt{1-\gamma} & 0 \\
					0 & 1
				\end{pmatrix}, 
	\quad
	K_3 = \sqrt{1-p} \begin{pmatrix}
					0 & 0 \\
					\sqrt{\gamma} & 0
				\end{pmatrix}.
\end{equation*}
Repeating the procedure outlined in the previous subsection, we will use a refined dual feasible point to find a bound on the diamond distance in terms of the average fidelity. This feasible point improves over what can be obtained using the Magesan-Gambetta-Emerson feasible solution~\cite{Magesan2012}.

\begin{theorem} \label{thm:AD}
For the one-qubit amplitude damping channel defined above, the following relation is valid for any choice of parameters $p,\gamma \in [0,1]$:
\begin{equation*}
	D_{\emph{AD}}  \leq 3 r_{\emph{AD}} \max\{p,1-p\}.
\end{equation*}
\end{theorem}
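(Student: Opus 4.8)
The plan is to upper-bound $D_{\text{AD}}$ by exhibiting a single feasible point of the dual program~\eqref{eq:diamond_dual} and invoking weak duality: for $\Delta = \E_{\text{AD}} - \I$, any $Z \geq J(\Delta)$ with $Z \geq 0$ obeys $D_{\text{AD}} \leq \|\tr_B(Z)\|_\infty$. So the entire task reduces to constructing one good $Z$ and evaluating the operator norm of its partial trace.

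First I would assemble $J(\Delta) = J(\E_{\text{AD}}) - J(\I)$ from the four Kraus operators via $J(\E) = \sum_i |K_i)(K_i|$. Writing $s = 1-\sqrt{1-\gamma}$, so that \autoref{prop:infid_formula} gives $3 r_{\text{AD}} = s + \gamma/2$, one finds that $J(\Delta)$ is block diagonal: a $2\times2$ block $M = \bigl(\begin{smallmatrix} -(1-p)\gamma & -s \\ -s & -p\gamma \end{smallmatrix}\bigr)$ on the coherence subspace $\mathrm{span}\{\ket{0}_A\ket{0}_B,\, \ket{1}_A\ket{1}_B\}$, together with the nonnegative diagonal entries $(1-p)\gamma$ and $p\gamma$ on $\ket{0}_A\ket{1}_B$ and $\ket{1}_A\ket{0}_B$. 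Because the channels at parameters $(p,\gamma)$ and $(1-p,\gamma)$ are related by conjugating with the bit-flip $X$ — a relabelling that preserves the diamond distance to the identity — I may assume $p \geq \tfrac12$ throughout, so $\max\{p,1-p\} = p$.

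Next I would take $Z$ with the same block-diagonal support, using the minimal values $(1-p)\gamma$ and $p\gamma$ on the two singleton directions (these already saturate $Z \geq J(\Delta)$ there) and a positive semidefinite $2\times2$ block $G = \bigl(\begin{smallmatrix} a & c \\ c & b\end{smallmatrix}\bigr) \geq M$ on the coherence subspace. With this ansatz the off-diagonal partial-trace terms vanish, so $\tr_B(Z) = \mathrm{diag}(a+(1-p)\gamma,\, b+p\gamma)$ and $\|\tr_B(Z)\|_\infty = \max\{a+(1-p)\gamma,\, b+p\gamma\}$. The refinement over the Magesan--Gambetta--Emerson feasible point is to fix $G$ so that both diagonal entries equal the target $t := 3 r_{\text{AD}}\, p = (s+\tfrac{\gamma}{2})p$; that is, $a = t-(1-p)\gamma$ and $b = t-p\gamma$, whereupon $G - M = t\,I + (c+s)\bigl(\ket{0}\!\bra{1}+\ket{1}\!\bra{0}\bigr)$ on the block and $\|\tr_B(Z)\|_\infty = t$ by construction.

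The main obstacle is confirming that this $G$ is simultaneously $\geq 0$ and $\geq M$; both collapse to scalar conditions on the single free parameter $c$. Positivity $G \geq 0$ requires $a,b \geq 0$ and $c^2 \leq ab$, while $G \geq M$ requires $|c+s| \leq t$. One checks $a,b \geq 0$ directly (using $s \geq \gamma/2$, i.e.\ $\sqrt{1-\gamma} \leq 1-\gamma/2$), and the two admissible intervals for $c$ intersect precisely when $\sqrt{ab} \geq s - t$. This is automatic when $s \leq t$ — in particular at zero temperature $p=1$, recovering $D_{\text{AD}} = 3 r_{\text{AD}}$ — and otherwise reduces to the single algebraic inequality $ab \geq (s-t)^2$, which I would verify for all $p \in [\tfrac12,1]$ and $\gamma \in [0,1]$; it holds with equality exactly at $p=\tfrac12$, reproducing the dephasing value $D_{\text{AD}} = \tfrac32 r_{\text{AD}}$. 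This inequality is the only genuinely computational step. Granting it, the constructed $Z$ is dual feasible with $\|\tr_B(Z)\|_\infty = t = 3 r_{\text{AD}}\max\{p,1-p\}$, and weak duality delivers the theorem.
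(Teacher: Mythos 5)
Your proposal is correct, and at the top level it is the same strategy as the paper's proof: exhibit one feasible point $Z$ of the dual SDP \eqref{eq:diamond_dual}, supported on $\mathrm{span}\{\ket{00},\ket{11}\}$ plus the two singleton directions, and invoke weak duality. The certificates themselves differ in a way worth recording. The paper works with $\Delta = \I - \E_{\mathrm{AD}}$, for which the $\{\ket{01},\ket{10}\}$ block of $J(\Delta)$ is negative semidefinite, so its choice $Z = 2x\proj{\psi_{\mathrm{Bell}}} + y_0\proj{00} + y_1\proj{11}$ (with $x=(s+\gamma/2)/2$, $y_0=\max\{\gamma/2-p\gamma,0\}$, $y_1=\max\{p\gamma-\gamma/2,0\}$) can vanish on that block, and feasibility is immediate because $Z - J(\Delta)$ is exhibited as a sum of manifestly positive semidefinite terms; its certificate value $(s+\gamma/2)/2 + \gamma|1-2p|/2$ is then relaxed, via $s\ge\gamma/2$, to $3r_{\mathrm{AD}}\max\{p,1-p\}$ --- so the paper's point is in fact slightly tighter than the stated bound. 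Your construction instead equalizes the diagonal of $\tr_B(Z)$ at exactly the target $t = 3r_{\mathrm{AD}}\max\{p,1-p\}$ and tunes the free off-diagonal $c$, which concentrates all the work in the existence condition $\sqrt{ab}\ge s-t$. That deferred inequality does hold: writing $v=\gamma/2$, one computes $ab-(s-t)^2 = -4v^2\bigl(p-\tfrac12\bigr)\bigl(p-\tfrac{s^2}{2v^2}\bigr)$, and on the only regime where it is needed, namely $p\in\bigl[\tfrac12,\, s/(s+v)\bigr]$ (where $s\ge t$), this is nonnegative because $s/(s+v)\le s^2/(2v^2)$, which is equivalent to $(s+2v)(s-v)\ge 0$, i.e.\ to $s\ge\gamma/2$ once more; equality occurs only at $p=\tfrac12$, as you claimed. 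So your proof closes, and your bit-flip reduction to $p\ge\tfrac12$ is a legitimate simplification the paper does not use; the trade-off is that the paper's ansatz makes positivity free and puts the (one-line) burden on the norm estimate, while yours makes the norm trivial by design and shifts the burden into a quadratic feasibility check. One caveat: your side remarks that the construction ``recovers'' $D_{\mathrm{AD}}=3r_{\mathrm{AD}}$ at $p=1$ and $D_{\mathrm{AD}}=\tfrac32 r_{\mathrm{AD}}$ at $p=\tfrac12$ overstate what a dual point can give --- it certifies only upper bounds, and equality would require a matching primal feasible point (or, as in the paper, numerical evaluation of the SDP).
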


\begin{proof}
We first compute the Choi-Jamio\l{}kowski matrix $J(\Delta)$ for $\Delta = \mathcal{I} - \mathcal{E}_{\text{AD}}$. In the basis $\ket{00}, \ket{01}, \ket{10}, \ket{11}$, this matrix is
\begin{align}
	J(\Delta) = 
\begin{pmatrix}
 (1-p) \gamma & 0 & 0 & 1-\sqrt{1-\gamma} \\
 0 & -(1-p) \gamma & 0 & 0 \\
 0 & 0 & -p \gamma  & 0 \\
 1-\sqrt{1-\gamma} & 0 & 0 & p \gamma  \\
\end{pmatrix}.
\end{align}

The middle block is already negative semidefinite and so our dual feasible point $Z$ can afford to have zero support on this subspace and still meet the constraints of Eq.~(\ref{eq:diamond_dual}). Let us therefore make the ansatz that 
\begin{align}\label{eq:firstZgad}
	Z = \left(
\begin{array}{cccc}
  x+y_0 & 0 & 0 & x \\
 0 & 0  & 0 & 0 \\
 0 & 0 & 0  & 0 \\
  x& 0 & 0 &  x+y_1 \\
\end{array}
\right) 
= 2x | \psi_{\mathrm{Bell}} \rangle \langle \psi_{\mathrm{Bell}}| + y_0  | 0 0 \rangle \langle 0 0 | +y_1 |1 1 \rangle \langle 1 1| 
\end{align}
where $x= \left(1-\sqrt{1-\gamma}+\gamma/2\right)/2$ and we will determine the parameters $y_0,y_1 \geq 0$. Such a choice of parameters assures that $Z$ is positive semidefinite. 

The only other constraint that must be respected is that $Z - J(\Delta)$ must be positive semidefinite. Let us define $x_-= \left(1-\sqrt{1-\gamma}-\gamma/2\right)/2\geq 0$. Here we have used the elementary relation $1-\sqrt{1-\gamma} \geq \frac{\gamma}{2}$ (which follows from concavity of the square root). Secondly we can define $|\psi_{-\mathrm{Bell}}\rangle=(|00\rangle-|11\rangle)/\sqrt{2}$. In terms of this we may write
\begin{align}\label{eq:firstZgad}
	Z -J(\Delta)
=& 2x_- | \psi_{-\mathrm{Bell}} \rangle \langle \psi_{-\mathrm{Bell}}| + [y_0-(1/2-p)\gamma]  | 0 0 \rangle \langle 0 0 | + [y_1+(1/2-p)\gamma]  |1 1 \rangle \langle 1 1| \nonumber \\ &+(1-p)\gamma |01\rangle\langle 01|+p\gamma |10\rangle\langle 10 |.
\end{align}
Accordingly, this difference is positive semidefinite, if both
\begin{equation*}
y_0-(1/2-p)\gamma \geq 0
\quad \textrm{and} \quad 
y_1+(1/2-p)\gamma \geq 0
\end{equation*}
hold. 
Setting $y_0 = \max \{ \gamma/2-p\gamma,0\}$ and $y_1 = \max \{ 0, p\gamma-\gamma/2 \}$ satisfies the requirements. The two cases correspond to $p\leq 1/2$ and $p\geq 1/2$ respectively.  
Such a choice of parameters assures that $Z$ is a valid feasible point of the dual SDP \eqref{eq:diamond_dual} of the channel's diamond distance. Its objective function value amounts to
\begin{align*}
\| \mathrm{tr}_B \left( Z \right) \|_\infty
=& \| 2x\; \mathrm{tr}_B \left( | \psi_{\mathrm{Bell}} \rangle \langle \psi_{\mathrm{Bell}}| \right)
+ \mathrm{tr}_B \left(y_0 |00 \rangle \langle 0 0 | +y_1 |1 1 \rangle \langle 1 1| \right) \|_\infty  \\
=& \max\{ x+y_0,x+y_1\} \\ 
=& (1-\sqrt{1-\gamma} +\gamma/2 )/2+ \gamma|1-2p|/2 \\
\leq & (1-\sqrt{1-\gamma} +\gamma/2 )(1+|1-2p|)/2 \\
=& (1-\sqrt{1-\gamma} +\gamma/2 ) \max\{p,1-p\}\,.
\end{align*}
The inequality arises because $1-\sqrt{1-\gamma} \geq \gamma/2$ as noted above.

Using the formula of Eq.~(\ref{eqn:infid_formula}), one easily obtains $r_{\text{AD}} = \frac{1}{3} \bigl(1-\sqrt{1-\gamma} + \frac{\gamma}{2} \bigr) $. 
From this we may conclude
\begin{equation*}
D_{\mathrm{AD}} = \frac{1}{2} \| \Delta_{\mathrm{AD}} \|_\diamond \leq \| \mathrm{tr}_B (Z) \|_\infty \leq (1-\sqrt{1-\gamma} +\gamma/2 )\max\{p,1-p\}
= 3 r_{\mathrm{AD}}\max\{p,1-p\}.
\end{equation*}
This is the inequality that was to be proven. 
\end{proof}

\subsection{Incoherent leakage errors}

Our model of incoherent leakage errors for a single qubit is typical of a physical leakage process that may occur. We assume that the qubit state $|1\rangle$ can relax to a leakage state $|l\rangle$. We specify the noise process in terms of a leakage probability $p$ and Kraus operators
\begin{equation*}
K_0 = |0 \rangle \langle 0| + \sqrt{1-p} |1 \rangle \langle 1| +|l\rangle \langle l|, 
\quad
K_1 = \sqrt{ p} |l \rangle \langle 1 | .
\end{equation*}

To compute the average fidelity over initial qubit states we note that this average fidelity is unchanged if we replace the noise process with the a noise map where the Kraus operators are $\Pi_q K_i \Pi_q$ and $\Pi_q= |0 \rangle \langle 0| +  |1 \rangle \langle 1|$ is the projector on the qubit subspace. The resulting process maps the qubit subspace to the qubit subspace and is completely positive but not trace preserving. We can thus evaluate the average fidelity using \autoref{prop:infid_formula} which is valid for non-trace-preserving maps. Given this we find $r_{\text{IL}}= [4-(1+\sqrt{1-p})^2+p]/6=[1-\sqrt{1-p}+p]/3$. 

Note that if the average fidelity is computed over the full three-level space, the answer is slightly different and corresponds to $[1-\sqrt{1-p}+p/4]/3$. Using this alternate characterization of average error rate gives only minor quantitative and no qualitative changes to our conclusions. We therefore choose the average only over the qubit space as a more physically motivated quantity. 

To bound the diamond norm error we modify the dual feasible solution that worked for the thermal relaxation process above.
The Choi matrix of the channel difference is
\begin{equation*}
J \left( \Delta_{\mathrm{IL}} \right) 
= - p |1 1 \rangle \langle 1 1 | + p |1 l \rangle \langle 1 l | 
 + \left(\sqrt{1-p} - 1 \right) \left( |00 \rangle \langle 11 | + |11 \rangle \langle 00 | +  |ll \rangle \langle 11 | + |11 \rangle \langle ll | \right).
\end{equation*}

We choose
\begin{equation*}
Z = \left( 1 - \sqrt{1-p} \right)\left( |0 0 \rangle \langle 0 0 | + |l l \rangle \langle l l | + |00 \rangle \langle ll | + |ll \rangle \langle 00 |\right)+ p  |1 l \rangle \langle 1 l | .
\end{equation*}
as dual feasible point.
It is clear that $Z\geq 0$ and the second feasibility condition follows from
\begin{align*}
	Z - J \left( \Delta_{\mathrm{IL}} \right)
	=& 3 \left( 1- \sqrt{1-p} \right) | \psi_{\mathrm{Bell}} \rangle \langle \psi_{\mathrm{Bell}} | 
	+ \left[ p - \left( 1 - \sqrt{1-p} \right) \right] |11 \rangle \langle 11|,
\end{align*}
where here $|\psi_{\mathrm{Bell}} \rangle :=  \sum_{i=1}^3( |i \rangle \otimes | i \rangle)/\sqrt{3}$.
A routine calculation verifies that the coefficient in front of $|11 \rangle \langle 11|$ is nonnegative for any $p \in [0,1]$ and $Z - J \left( \Delta_{\mathrm{IL}} \right)$ is thus positive semidefinite.
Inserting $Z$ into the dual problem's objective function \eqref{eq:diamond_dual} yields
\begin{equation}
	D_{\text{IL}}
	\leq \left\| \tr_B \left( Z \right) \right\|_\infty 
	= \left\|  \left( 1 -\sqrt{1-p} \right)\left( |l \rangle \langle l | +  |0 \rangle \langle 0 |\right)+  p|1 \rangle \langle 1 | \right\|_\infty 
	=  p \leq 2r_{\text{IL}} . \label{eq:ILone}
\end{equation}
This is the inequality that we wished to show. (The final inequality follows because $p=(p+2p)/3\leq 2(1-\sqrt{1-p}+p)/3 =2r_{IL} $ since $p/2\leq 1-\sqrt{1-p}$ as noted above.)

We may also consider the following alternative model for incoherent leakage in $d$-dimensional quantum systems:
\begin{equation*}
\mathcal{E}_{\mathrm{IL}}(\rho) = p P \rho P + (1-p) \rho,
\end{equation*}
with $p \in [0,1]$ and $P$ is a rank-deficient orthogonal projection (i.e.\ $P \geq 0$, $P^2=P$ and $1 \leq \mathrm{tr}(P) \leq d-1$). For single qubits ($d=2$), $P$ necessarily coincides with a pure quantum state and we recover the incoherent leakage model examined in \cite[Eq.~(25)]{Wallman2015a}. This channel model has the advantage that we can exactly determine its diamond distance:
\begin{equation}
D_{\mathrm{IL}} = p. \label{eq:incoherent_leakage_distance}
\end{equation}

The related computations greatly simplifies if we exploit unitary invariance of both diamond distance and average error rate
\footnote{
$\mathcal{E} (\rho)$ and $\tilde{\mathcal{E}}(\rho)=U \mathcal{E} (U^\dagger \rho U )U^\dagger$ have equal diamond distance and average error rate.}. 
This unitary invariance allows us to w.l.o.g. assume that $P$ is diagonal in the computational basis: $P = \sum_{k=1}^{\mathrm{rank}(P)} |k \rangle \langle k|$. The Choi matrix of $\Delta_{\mathrm{IL}}=\mathcal{I}-\mathcal{E}_{\mathrm{IL}}$ then amounts to
\begin{equation*}
J \left( \Delta_{\mathrm{IL}} \right) = d \mathcal{I} \otimes \left( \mathcal{I}-\mathcal{E}_{\mathrm{IL}} \right) \left( | \psi_{\mathrm{Bell}} \rangle \langle \psi_{\mathrm{Bell}}| \right)
= p d\left( | \psi_{\mathrm{Bell}} \rangle \langle \psi_{\mathrm{Bell}}| - | \psi_P \rangle \langle \psi_P | \right),
\end{equation*}
where $| \psi_P \rangle = \frac{1}{\sqrt{d}} \sum_{k=1}^d |k k \rangle = \frac{1}{\sqrt{d}} \sum_{k=1}^{\mathrm{rank}(P)} |k k \rangle$. 
In order to obtain an upper bound, we choose the following feasible point of the diamond distance's dual SDP: $Z = pd | \psi_{\mathrm{Bell}} \rangle \langle \psi_{\mathrm{Bell}}|$. 
Clearly, this matrix is a feasible point, because $Z \geq 0$ and $Z - J (\Delta_{\mathrm{IL}})= pd | \psi_P \rangle \langle \psi_P| \geq 0$. It's corresponding objective function value amounts to
\begin{equation*}
\| \mathrm{tr}_B (Z) \|_\infty = pd \| \mathrm{tr}_B \left( | \psi_{\mathrm{Bell}} \rangle \langle \psi_{\mathrm{Bell}}| \right) \|_\infty
= pd \| \frac{1}{d} \mathbb{I} \|_\infty = p,
\end{equation*}
which serves as our upper bound on $D_{\mathrm{IL}}$. 

For a lower bound, we turn to the primal SDP of the diamond distance. We set $\rho = |d \rangle \langle d|$ and $W=|dd \rangle \langle dd|$ which is a feasible pair of primal variables ($W \leq \rho \otimes \mathbb{I}$, $\mathrm{tr}(\rho)=1$ and $W,\rho \geq 0$). 
Evaluating the primal objective function at this point results in
\begin{align*}
\left( J (\Delta_{\mathrm{IL}}), W \right) 
=& dp \left| \langle d d | \psi_{\mathrm{Bell}} \rangle \right|^2 - pd \left| \langle d d | \psi_P \rangle \right|^2
= p.
\end{align*}
Note that this lower bound on $D_{\mathrm{IL}}$ coincides with the upper bound established below. Weak duality allows us to conclude \eqref{eq:incoherent_leakage_distance}.

Finally, the average error rate of $\mathcal{E}_{\mathrm{IL}}$ can be readily computed via Formula~\eqref{eqn:infid_formula} and amounts to
\begin{equation*}
r_{\mathrm{IL}} = p \left( 1- \frac{ \mathrm{tr}(P) (\mathrm{tr}(P)+1)}{(d+1)d} \right)
\in \left[ \frac{2p}{d+1}, p \left( 1-\frac{2}{d(d+1)}\right) \right].
\end{equation*}
The upper bound is saturated for rank-one projectors $P$, while the lower bound is achieved for projectors with $\mathrm{rank}(P)=d-1$. Comparing this to $D_{\mathrm{IL}}=p$ reveals
\begin{equation*}
D_{\mathrm{IL}} = \left( 1- \frac{ \mathrm{tr}(P)(\mathrm{tr}(P)+1)}{(d+1)d} \right) r_{\mathrm{IL}} \leq \frac{d+1}{2} r_{\mathrm{IL}}.
\end{equation*}
The upper bound provided here is tight for $(d-1)$-dimensional projections and becomes increasingle loose for more rank-deficient ones. For single qubits ($d=2$), however, the upper bound is tight and we obtain
$
D_{\mathrm{IL}} = \frac{3}{2} r_{\mathrm{IL}}.
$. 
Finally, choosing $d=3$ and $\mathrm{rank}(P)=2$ mimics the dimensionalities ocurring in our previous model for incoherent leakage. For such a choice, we obtain
\begin{equation*}
D_{\mathrm{IL}} = 2 r_{\mathrm{IL}},
\end{equation*}
which agrees with \eqref{eq:ILone}, but is slightly stronger.

\subsection{Coherent leakage errors}

The coherent leakage process that we consider is a unitary error process 
\begin{equation}
U (\delta)=\exp [-i\delta (|1\rangle \langle l| +|l\rangle \langle 1|)] = |0\rangle \langle 0| + \cos (\delta) ( |1\rangle \langle 1| + |l\rangle \langle l| )-i \sin (\delta) (|1\rangle \langle l| +|l\rangle \langle 1|),
\label{eq:coherent_leakage}
\end{equation}
where $\delta \in [-\pi,\pi ]$ mediates the error strength.
We can derive the average-case error using the same trick as above of projecting onto the qubit subspace. Note that $\Pi_q U \Pi_q =  |0\rangle \langle 0| + \cos (\delta)  |1\rangle \langle 1| $. As a result we find $r_{\text{CL}} = [2-\cos\delta -\cos^2\delta]/3$. Unlike the incoherent case, the average error rate here is by coincidence the same if we compute it in the projected space or in the three-level space.

On the other hand the computation of the diamond norm distance is more straightforward for unitary error models such as this, since the optimization over input states entangled with an ancilla in the definition is not required. More details of the computation of the diamond norm distance for general unitary errors are given in the following subsection. The result of \autoref{cor:CL} is that $D_{\text{CL}}=|\sin\delta|$. 

To relate worst-case and average-case error, we employ the relation 
$4\sin^2(\delta/2)\geq \sin^2\delta$ which assures
\begin{equation*}
r_{\text{CL}}=(1-\cos\delta)/3+(1-\cos^2\delta)/3= 2\sin^2(\delta/2)/3+\sin^2(\delta)/3\geq \sin^2(\delta)/2 = D^2_{\text{CL}}/2.
\end{equation*}
On the other hand we can place a lower bound on the diamond norm distance. To tighten it, we will consider the case of moderately small error with $\delta \in [-\pi/2,\pi/2]$.
This assures $\cos^2\delta \leq \cos\delta$ and we obtain
\begin{equation*}
r_{\text{CL}}=(1-\cos\delta)/3+(1-\cos^2\delta)/3\leq 2[1- \cos^2(\delta)]/3=2 \sin^2(\delta)/3= 2 D^2_{\text{CL}}/3.
\end{equation*}
So for the restricted range of $\delta \in [-\pi/2,\pi/2]$ we have 
\begin{equation*}
\sqrt{3r_{\text{CL}}/2} \leq D_{\text{CL}} \leq \sqrt{2r_{\text{CL}}}
\end{equation*}
which is the inequality we intended to show and demonstrates that  the diamond norm distance scales with $\sqrt{r_{\text{CL}}}$.

\subsection{Unitary errors}
\label{sec:unitary}

In this section we do not restrict ourselves to qubits anymore and consider $d$-dimensional unitary channels, i.e.\  
\begin{equation*}
	\rho  \mapsto U \rho U^\dagger
\end{equation*}
where $U: \mathbb{C}^d \to \mathbb{C}^d$ is a unitary matrix ($U U^\dagger = U^\dagger U = I$). 
As we will show now, all channels of this form admit the unfavorable ``square root'' behavior where the worst-case error is roughly equal to the square root of the average case error.
We summarize our results as follows.

\begin{theorem} \label{thm:U}
Fix a dimension $d$ and let $\mathcal{E}_{\mathrm{U}}$ be a unitary channel.
Then 
\begin{equation}
	 \sqrt{\frac{d+1}{d}} \sqrt{r_{\mathrm{U}}} \leq  D_ \mathrm{U}
	\leq \sqrt{(d+1)d}\sqrt{r_{\mathrm{U}}}.\label{eq:U}
\end{equation}
Moreover, for single-qubit unitary channels, the lower bound holds with equality, i.e.\
$D_ \mathrm{U}= \sqrt{3 r_{\mathrm{U}}/2}$.
\end{theorem}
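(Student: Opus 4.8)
The plan is to reduce the entire statement to the single positive eigenvalue of the Choi matrix $J(\Delta)$ of $\Delta=\I-\E_{\mathrm U}$ and then feed the Magesan--Gambetta--Emerson feasible points into the diamond-norm SDPs \eqref{eq:diamond_primal}--\eqref{eq:diamond_dual}. First I would evaluate $r_{\mathrm U}$ with \autoref{prop:infid_formula}: a unitary channel satisfies $\E_{\mathrm U}(I)=I$ and $L(\E_{\mathrm U})=\bar U\otimes U$, so $\tr[\E_{\mathrm U}(I)]=d$ and $\tr[L(\E_{\mathrm U})]=|\tr U|^2$, giving
\[
 r_{\mathrm U}=\frac{d^2-|\tr U|^2}{d(d+1)}.
\]
Next I would record that $J(\Delta)=d\big(\proj{\psi_{\mathrm{Bell}}}-\proj{\psi_U}\big)$, where $\ket{\psi_U}=(I\otimes U)\ket{\psi_{\mathrm{Bell}}}$ has overlap $\langle\psi_{\mathrm{Bell}}|\psi_U\rangle=\tr U/d$. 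Being $d$ times a difference of two rank-one projectors, $J(\Delta)$ is traceless with spectrum $\{+\lambda,-\lambda,0,\dots,0\}$, where $\lambda=d\sqrt{1-|\tr U|^2/d^2}=\sqrt{d^2-|\tr U|^2}=\sqrt{d(d+1)\,r_{\mathrm U}}$. I write $\Pi_\pm=\proj{v_\pm}$ for the rank-one projectors onto the $\pm\lambda$ eigenvectors.

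Both inequalities of \eqref{eq:U} then drop out of these feasible points. For the \emph{lower} bound I use the primal point $\rho=\tfrac1d I$, $W=\tfrac1d\Pi_+$, which is feasible since $\Pi_+\le I$; its objective value is $\tfrac1d\avg{J(\Delta),\Pi_+}=\tfrac{\lambda}{d}=\sqrt{\tfrac{d+1}{d}\,r_{\mathrm U}}$, and because the primal is a maximization every feasible value lower bounds $D_{\mathrm U}$. For the \emph{upper} bound I use the dual point $Z=\lambda\Pi_+=\Pi_+J(\Delta)\Pi_+$, which is feasible since $Z-J(\Delta)=\lambda\Pi_-\ge0$; its objective value is $\|\tr_B(\lambda\Pi_+)\|_\infty=\lambda\,\|\tr_B\proj{v_+}\|_\infty\le\lambda=\sqrt{d(d+1)\,r_{\mathrm U}}$, where I used that the reduced state $\tr_B\proj{v_+}$ has unit trace and hence operator norm at most one. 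Since the dual is a minimization this upper bounds $D_{\mathrm U}$, establishing \eqref{eq:U}.

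Finally, the single-qubit equality follows by showing that these two feasible points \emph{pinch}, i.e.\ that the only slack used above, $\|\tr_B\proj{v_+}\|_\infty\le1$, tightens to $\tfrac12$. By the unitary invariance of both $D$ and $r$ noted earlier I may take $U=\mathrm{diag}(e^{i\theta_1},e^{i\theta_2})$; then both $\ket{\psi_{\mathrm{Bell}}}$ and $\ket{\psi_U}$ lie in $\mathrm{span}\{\ket{00},\ket{11}\}$, and in that basis $J(\Delta)$ is the off-diagonal $2\times2$ matrix with entries $1-e^{\pm i(\theta_1-\theta_2)}$, whose eigenvectors have components of equal modulus. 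Hence $\ket{v_+}$ is maximally entangled, $\tr_B\proj{v_+}=\tfrac12 I$, the dual objective collapses to $\lambda/2=\sqrt{\tfrac32 r_{\mathrm U}}$, and weak duality forces $D_{\mathrm U}=\sqrt{3r_{\mathrm U}/2}$. I expect this last step --- verifying that the leading eigenvector is maximally entangled --- to be the only genuine computation and the main obstacle: it is exactly what makes the bounds coincide for $d=2$ but separate for $d\ge3$, since for higher dimensions the coefficients $\alpha+\beta e^{i\theta_j}$ of $\ket{v_+}$ generically have unequal moduli, so $\|\tr_B\proj{v_+}\|_\infty>1/d$ and the factor-of-$d$ gap between the two objectives cannot be closed.
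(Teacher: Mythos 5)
Your proposal is correct, but it reaches both the lower bound and the single-qubit equality by a genuinely different route than the paper. The upper bound coincides with the paper's: both of you take the dual feasible point $Z=\lambda_+\proj{v_+}$ built from the positive eigenvalue $\lambda_+=\sqrt{d(d+1)\,r_{\mathrm U}}$ of $J(\Delta)$ and bound $\|\tr_B(Z)\|_\infty\le\lambda_+$. For the lower bound, however, the paper does not touch the primal SDP at all: it invokes the fact that for a difference of unitary channels the diamond norm equals the induced trace norm (citing \cite{Watrous2011}), which yields the variational formula $D_{\mathrm U}=\max_{\|x\|_{\ell_2}=1}\sqrt{1-|\langle x|U|x\rangle|^2}$ of \eqref{eq:U_diamond1}, and then evaluates it at the uniform superposition of eigenvectors of $U$. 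Your primal feasible point $(\rho,W)=(\tfrac1d I,\tfrac1d\Pi_+)$ delivers exactly the same value $\lambda_+/d$ by weak duality, keeping the whole argument inside the SDP framework of \eqref{eq:diamond_primal}--\eqref{eq:diamond_dual} with no external theorem needed. The same divergence shows up at $d=2$: the paper minimizes the double-well expression $|\langle x|U|x\rangle|^2$ over the Bloch sphere explicitly, locating the minimizer at $x_1=1/\sqrt2$, whereas you observe that the restriction of $J(\Delta)$ to $\mathrm{span}\{\ket{00},\ket{11}\}$ is purely off-diagonal, so its top eigenvector is maximally entangled, the dual objective tightens to $\lambda_+/2$, and weak duality pinches the two feasible values to give $D_{\mathrm U}=\sqrt{3r_{\mathrm U}/2}$; both computations are valid. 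As for what each approach buys: the paper's trace-norm formula is reused to compute the coherent-leakage diamond distance in \autoref{cor:CL}, so it earns its keep later in the supplement, while your argument is more self-contained and uniform (one spectral computation plus weak duality does everything) and makes transparent why the pinching is special to $d=2$ --- for $d\ge3$ the leading eigenvector of $J(\Delta)$, with components proportional to $\alpha+\beta\mathrm{e}^{i\delta_k}$, is generically not maximally entangled, so this particular primal/dual pair cannot close the factor-of-$d$ gap.
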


While the lower bound in \eqref{eq:U} is tight, we do not know if the dimensional dependence in the upper bound can be further improved and leave this for future work.

\begin{proof}[Proof of \autoref{thm:U}]
Every unitary matrix $U$ is normal and as such has an eigenvalue decomposition
\begin{equation*}
	U = \sum_{k=1}^d \mathrm{e}^{i \delta_k} |k \rangle \langle k|,
\end{equation*}
with eigenvalues $\mathrm{e}^{i \delta_k}$ on the complex unit circle and an orthogonal eigenbasis $\left\{ | k \rangle \right\}_{k=1}^d$ of $\mathbb{C}^d$. 
It greatly facilitates our work if we define the maximally entangled state $| \psi_{\mathrm{Bell}} \rangle = \frac{1}{\sqrt{d}} \sum_{k=1}^d |k,k \rangle$ with respect to this eigenbasis. 
With such a choice, the channel's Choi matrix simply corresponds to
\begin{equation*}
J \left( \mathcal{E}_{\mathrm{U}} \right) = d (\mathcal{E}_{\mathrm{U}} \otimes \mathcal{I}) \left( | \psi_{\mathrm{Bell}} \rangle \langle \psi_{\mathrm{Bell}} | \right)
= d \left( U \otimes \mathbbm{1} \right) | \psi_{\mathrm{Bell}} \rangle \langle \psi_{\mathrm{Bell}} |  \left( U^\dagger \otimes \mathbbm{1} \right)
= d | \phi_{\mathrm{U}} \rangle \langle \phi_{\mathrm{U}} |,
\end{equation*}
where $| \phi_{\mathrm{U}} \rangle = \frac{1}{\sqrt{d}} \sum_{k=1}^d \mathrm{e}^{i \delta_k} |k k \rangle$ is again a maximally entangled state.
The channel's average error rate then corresponds to
\begin{equation}
r _{\mathrm{U}} 
= \frac{d - \langle \psi_{\mathrm{Bell}} | J \left( \mathcal{E}_{\mathrm{U}} \right) | \psi_{\mathrm{Bell}} \rangle}{d+1}
= \frac{d -d \left| \langle \psi_{\mathrm{Bell}} | \phi_{\mathrm{U}} \rangle \right|^2 }{d+1}
= \frac{d^2 - \left| \sum_{k=1}^d \mathrm{e}^{i \delta_k} \right|^2}{d(d+1)}. \label{eq:U_r}
\end{equation}
For the upper bound in \eqref{eq:U}, we use the fact that the Choi matrix of the channel difference $\Delta_{\mathrm{U}} = \mathcal{E}_{\mathrm{U}} - \mathcal{I}$ assumes the form
\begin{equation*}
J \left( \Delta_{\mathrm{U}} \right) = d \left( \left( U \otimes \mathbbm{1} \right) | \psi_{\mathrm{Bell}} \rangle \langle \psi_{\mathrm{Bell}} | \left( U^\dagger \otimes \mathbbm{1} \right) - | \psi_{\mathrm{Bell}} \rangle \langle \psi_{\mathrm{Bell}} | \right)
\end{equation*}
which is proportional to the difference of two rank-one projectors. Such a matrix has two non-zero eigenvalues
\begin{equation*}
\lambda_{\pm} =  \pm d \sqrt{ 1 - \left| \langle \psi_{\mathrm{Bell}} | \phi_{\mathrm{U}} \rangle \right|^2} = \pm \sqrt{(d+1)d} \sqrt{ r_{\mathrm{U}}}
\end{equation*}
and corresponding normalized eigenvectors $|v_+ \rangle, |v_- \rangle \in \mathbb{C}^{d^2}$ -- see e.g.~\cite[Example 2.3]{Watrous2011}. 
Setting $Z = \lambda_+ | v_+ \rangle \langle v_+| \geq 0 $ yields a valid dual feasible point for the diamond norm's dual SDP \eqref{eq:diamond_dual} and inserting it into the program's objective function reveals
\begin{equation*}
D _{U}  
\leq \left\| \tr_B \left( Z \right) \right\|_\infty
\leq \| \tr_B \left( Z \right) \|_1 = \tr \left( Z \right)= \lambda_+ \langle v_+ | v_+ \rangle = \lambda_+ 
= \sqrt{(d+1)d} \sqrt{r _{\mathrm{U}}},
\end{equation*}
as claimed. Here we have made use of the basic norm inequality $\| \cdot \|_\infty \leq \| \cdot \|_1$ and the fact that the partial trace preserves positive semidefiniteness which in turn assures $\| \tr_Y (Z) \|_1 = \tr \left( \tr_Y (Z) \right) = \tr (Z)$.

For the lower bound, we use the fact that for the difference of two unitary channels, diamond norm and induced trace norm coincide \cite[Theorem 20.7]{Watrous2011}. This in turn assures
\begin{equation}
D_{\mathrm{U}} = \frac{1}{2} \left\| \mathcal{E}_{\mathrm{U}}- \mathcal{I}  \right\|_{1 \to 1} 
= \frac{1}{2} \max_{\| x \|_{\ell_2} = 1} \left\| U |x \rangle \langle x| U^\dagger - |x \rangle \langle x| \right\|_1
= \max_{\|x\|_{\ell_2} = 1} \sqrt{1 - \left| \langle x|U|x \rangle \right|^2}, \label{eq:U_diamond1}
\end{equation}
where the last simplification once more exploits that the matrix of interest is a difference of two rank-one projectors.
Choosing the particular vector $\tilde{x} =  \sum_{k=1}^n |k \rangle/\sqrt{d} $ allows us to also conclude
\begin{equation}
D _{\mathrm{U}}
\geq \sqrt{ 1 - \left| \langle \tilde{x} | U |\tilde{x} \rangle \right|^2 } 
= \frac{1}{d} \sqrt{ d^2 - \left| \sum_{k=1}^d \mathrm{e}^{i \delta_k} \right|^2}
= \sqrt{\frac{d+1}{d}} \sqrt{ r _{\mathrm{U}}}, \label{eq:U_aux1}
\end{equation}
which is the lower bound presented in \eqref{eq:U}. 

For single-qubit unitary channels this argument can be substantially strengthened:
 in fact the inequality sign in \eqref{eq:U_aux1} can be replaced with actual equality.
To see this, we first note that any unitary channel $\mathcal{E}_{\mathrm{U}}$  is invariant under a global phase change $U \mapsto \mathrm{e}^{i \phi} U$ in the defining unitary matrix. 
For two-dimensional unitaries, this gauge freedom assures that we can w.l.o.g. assume that $U$ is of the form $\mathrm{e}^{i \delta} |0 \rangle \langle 0| + \mathrm{e}^{-i \delta} |1 \rangle \langle 1|$ with $\delta \in [-\pi,\pi]$. This in turn assures that any vector  $x = x_1 |0 \rangle + x_2 |1 \rangle \in \mathbb{C}^2$ obeys 
\begin{equation*}
\left| \langle x, U x \rangle \right|^2 
= \left| \mathrm{e}^{i \delta} |x_1|^2 + \mathrm{e}^{-i \delta} |x_2|^2 \right|^2  
= |x_1 |^4 + 2 \cos \left( 2 \delta \right) |x_1 |^2 |x_2|^2 + |x_2|^4.
\end{equation*}
Clearly, this function is ignorant towards individual phases of $x_1,x_2$ and when attempting to minimize it, we may focus on real coefficients only. 
Taking into account normalization allows us to restrict $x_1$ to the interval $[0,1]$ and setting $x_2^2=1-x_1^2$. Doing so reveals
\begin{equation}
\min_{\|x \|_{\ell_2} = 1} \left| \langle x, U x \rangle \right|^2
= \min_{x_1 \in [0,1]} \left( x_1^4 + 2 \cos (2 \delta)x_1^2 (1-x_1^2) + \left( 1 - x_1^2 \right)^2 \right)  
 = \min_{x_1 \in [0,1]} \left( 4 \sin^2 (\delta) \left( x_1^4 - x_1^2 \right) + 1 \right) \label{eq:double_well}
\end{equation}
and maximizing the expression on the r.h.s. of \eqref{eq:U_diamond1} is therefore equivalent to finding the minimum of the particularly simple double-well potential \eqref{eq:double_well}.
The minimal value of the latter is achieved for $x_1 = 1/\sqrt{2}$, which in turn assures that the vector $\tilde{x} = \left( |0 \rangle + |1 \rangle \right)/\sqrt{2}$ in fact
 minimizes $\left| \langle x, U x \rangle \right|^2$ and 
-- as claimed --  the inequality sign in \eqref{eq:U_aux1} can be replaced by equality.
\end{proof}

Similar techniques can be employed to exactly characterize the diamond distance 
of single qubit coherent leakage, as it was introduced in the previous subsection.

\begin{corollary}[Diamond distance of coherent leakage]\label{cor:CL}
Consider the three-level coherent leakage channel $U (\delta)$ with $\delta \in [-\pi,\pi]$ introduced in \eqref{eq:coherent_leakage}. 
Then, its diamond distance amounts to
$
D_{\mathrm{CL}} = \left| \sin (\delta) \right|.
$
\end{corollary}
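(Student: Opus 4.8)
The plan is to reduce the three-level coherent leakage channel to an effective two-level unitary problem and then apply the single-qubit analysis from the proof of \autoref{thm:U} almost verbatim. First I would observe that the unitary $U(\delta)$ in \eqref{eq:coherent_leakage} acts as the identity on $|0\rangle$ and mixes only the $|1\rangle,|l\rangle$ subspace. Since $D$ for a unitary channel equals $\tfrac{1}{2}\|\mathcal{E}_{\mathrm{U}}-\mathcal{I}\|_{1\to 1}$ by \cite[Theorem 20.7]{Watrous2011}, and this in turn equals $\max_{\|x\|_{\ell_2}=1}\sqrt{1-|\langle x|U|x\rangle|^2}$ as derived in \eqref{eq:U_diamond1}, the optimization over input states $|x\rangle\in\mathbb{C}^3$ decouples: the $|0\rangle$-component only dilutes the overlap deficit, so the optimal $|x\rangle$ lives entirely in the $\mathrm{span}\{|1\rangle,|l\rangle\}$ block. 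On that block $U(\delta)$ restricts to the $2\times 2$ unitary $\cos\delta\,(|1\rangle\langle 1|+|l\rangle\langle l|)-i\sin\delta\,(|1\rangle\langle l|+|l\rangle\langle 1|)$.

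Next I would diagonalize this $2\times 2$ block. Its eigenvalues are $\mathrm{e}^{\mp i\delta}$ with eigenvectors $(|1\rangle\pm|l\rangle)/\sqrt{2}$, so after the global-phase gauge freedom the restricted problem is exactly the single-qubit case treated in the proof of \autoref{thm:U}, with the same parameter $\delta$. Applying the explicit minimization in \eqref{eq:double_well}, valid for all $\delta\in[-\pi,\pi]$, gives $\min_{\|x\|=1}|\langle x|U|x\rangle|^2 = \cos^2\delta$ on this block, achieved at $\tilde{x}=(|1\rangle+|l\rangle)/\sqrt{2}$. Hence the diamond distance is
\begin{equation*}
D_{\mathrm{CL}} = \sqrt{1-\cos^2\delta} = |\sin\delta|,
\end{equation*}
and crucially the double-well computation in \eqref{eq:double_well} already handles the entire range $\delta\in[-\pi,\pi]$ without the restriction to $[-\pi/2,\pi/2]$ used elsewhere in the main text: the formula $4\sin^2\delta\,(x_1^4-x_1^2)+1$ is minimized at $x_1=1/\sqrt{2}$ regardless of the sign or magnitude of $\sin\delta$, since $x_1^4-x_1^2\le 0$ on $[0,1]$.

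The one point requiring care, and the main obstacle, is justifying rigorously that the optimal input state may be taken inside the two-dimensional nontrivial block rather than using a genuine $|0\rangle$-component. The cleanest argument is that $U(\delta)$ is block-diagonal as $1\oplus U_2$ where $U_2$ is the $2\times 2$ restriction, so for any $|x\rangle = \alpha|0\rangle + |w\rangle$ with $|w\rangle\in\mathrm{span}\{|1\rangle,|l\rangle\}$ one computes $|\langle x|U|x\rangle|^2 = ||\alpha|^2 + \langle w|U_2|w\rangle|^2$; since $\langle w|U_2|w\rangle$ lies in the convex hull of the eigenvalues $\mathrm{e}^{\pm i\delta}$, the modulus $|\langle x|U|x\rangle|$ is minimized by concentrating all weight in $|w\rangle$ (taking $\alpha=0$) and choosing $\langle w|U_2|w\rangle = \cos\delta$ real and as small in modulus as the $2\times 2$ problem allows. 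This reduces the three-level optimization to the already-solved two-level one and, combined with \eqref{eq:double_well}, establishes $D_{\mathrm{CL}}=|\sin\delta|$ over the full stated range.
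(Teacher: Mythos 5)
There is a genuine gap, and it sits exactly where you claim an advantage: the reduction to the $\{|1\rangle,|l\rangle\}$ block is valid only when $\cos\delta\ge 0$. Writing $|x\rangle=\alpha|0\rangle+|w\rangle$ you correctly obtain $\langle x|U|x\rangle=|\alpha|^2+\langle w|U_2|w\rangle$, and for a unit vector $\hat w$ the value $\langle \hat w|U_2|\hat w\rangle$ ranges over the segment $\mathrm{conv}\{\mathrm{e}^{i\delta},\mathrm{e}^{-i\delta}\}$, every point of which has real part exactly $\cos\delta$. For $\delta\in[-\pi/2,\pi/2]$ this gives $|\langle x|U|x\rangle|\ge\operatorname{Re}\langle x|U|x\rangle=|\alpha|^2+\|w\|^2\cos\delta\ge\cos\delta\ge 0$, so $\alpha=0$ is indeed optimal and your argument is sound on that range. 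But when $\cos\delta<0$ the eigenvalue $1$ of the trivial block can \emph{cancel} the block contribution rather than merely dilute it: take $|w\rangle\propto|v_+\rangle+|v_-\rangle$ with $|v_\pm\rangle=(|1\rangle\pm|l\rangle)/\sqrt{2}$ and $|\alpha|^2=|\cos\delta|/(1+|\cos\delta|)$; then $\langle x|U|x\rangle=|\alpha|^2+\|w\|^2\cos\delta=0$, so $D_{\mathrm{CL}}=1>|\sin\delta|$ for $\pi/2<|\delta|\le\pi$. The extreme case makes it vivid: $U(\pi)=\mathrm{diag}(1,-1,-1)\ne I$, yet $|\sin\pi|=0$. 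Geometrically, $\min_{\|x\|=1}|\langle x|U|x\rangle|$ is the distance from the origin to the numerical range $\mathrm{conv}\{1,\mathrm{e}^{i\delta},\mathrm{e}^{-i\delta}\}$, which equals $|\cos\delta|$ for the $2\times 2$ block alone but drops to $0$ for the full matrix once $\cos\delta<0$. So your emphasized claim that the proof ``handles the entire range $\delta\in[-\pi,\pi]$'' is where it breaks: the double-well computation \eqref{eq:double_well} does cover all $\delta$ \emph{within} the block, but the step ``$\alpha=0$ minimizes the modulus'' fails outside $[-\pi/2,\pi/2]$ --- and so does the stated identity itself.

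For what it is worth, you are in good company: the paper's own proof takes the same route (eigendecomposition with eigenvalues $1,\mathrm{e}^{\pm i\delta}$, formula \eqref{eq:U_diamond1}, then the assertion that $\left||x_1|^2+|x_2|^2\mathrm{e}^{i\delta}+|x_3|^2\mathrm{e}^{-i\delta}\right|^2$ is minimized at $x_1=0$, $|x_2|^2=|x_3|^2=1/2$) and shares exactly this defect, since that minimization claim is false for $\cos\delta<0$. On $[-\pi/2,\pi/2]$ your convex-hull/half-plane observation is actually more explicit than the paper's appeal to ``an analysis similar to'' the end of the proof of \autoref{thm:U}, and it makes that sketch rigorous. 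The honest conclusion is that your argument proves $D_{\mathrm{CL}}=|\sin\delta|$ precisely for $\delta\in[-\pi/2,\pi/2]$ --- the range the main text actually uses, where $\sqrt{3r_{\mathrm{CL}}/2}\le D_{\mathrm{CL}}\le\sqrt{2r_{\mathrm{CL}}}$ is asserted --- while for $\pi/2<|\delta|\le\pi$ the correct value is $D_{\mathrm{CL}}=1$. You should either restrict the range in the statement or add the case analysis on the sign of $\cos\delta$.
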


\begin{proof}
We start by noting that $U(\delta)$ as introduced in \eqref{eq:coherent_leakage}
admits an eigenvalue decomposition of the form 
$U (\delta) = \left( |v_0 \rangle \langle v_0| + \mathrm{e}^{i \delta} |v_+ \rangle \langle v_+ |
+ \mathrm{e}^{-i \delta} |v_- \rangle \langle v_-| \right)$,
where $|v_0 \rangle, |v_+ \rangle, |v_- \rangle$ form an orthonormal basis of $\mathbb{C}^3$. 
Since this channel is unitary, we can employ the particularly simple formula \eqref{eq:U_diamond1}
to calculate it's diamond distance:
\begin{equation}
D_{\mathrm{CL}} = \max_{\|x\|_{\ell_2}=1} \sqrt{1 - \left| \langle x | U (\delta) |x \rangle \right|^2}
\label{eq:CLaux1}
\end{equation}
Now note that for any vector $x = x_1 |v_0 \rangle + x_2 |v_+ \rangle + x_3 |v_- \rangle$ 
(represented with respect to the eigenbasis of $U(\delta)$), we have
\begin{equation*}
\left| \langle x | U (\delta) |x \rangle \right|^2
= \left| \left| x_1 \right|^2 + \left| x_2 \right|^2 \mathrm{e}^{i \delta} + \left| x_3 \right|^2 \mathrm{e}^{-i \delta } \right|^2.
\end{equation*}
An analysis similar to the one presented at the end of the proof of \autoref{thm:U}
reveals that such an expression is minimal for $x_1 =0$ and $\left| x_2 \right|^2 = \left|x_3 \right|^2 = 1/2$. Inserting such an optimal vector into \eqref{eq:CLaux1}
implies
\begin{equation*}
D_{\mathrm{CL}} = \max_{\| x \|_{\ell_2}=1} \sqrt{1 - \left| \langle x| U (\delta) | x \rangle \right|^2}
= \sqrt{1 - \cos^2(\delta)} = \left| \sin (\delta) \right|,
\end{equation*}
as claimed.
\end{proof}

\subsection{The unitarity and average error rate for two-qubit processes}

We now consider the noise process on two qubits in the main text, generated by $\mathrm{e}^{iH_{\text{CD2}}}$ where $H_{\text{CD2}} = \delta_1 \sigma_z^{(1)} + \delta_2 \sigma_z^{(2)} + \epsilon \sigma_z^{(1)}\sigma_z^{(2)}$. Because the unitarity and average error rate can be computed directly, without the need of analyzing a semidefinite program, we can simply use the formulas \eqref{eqn:infid_formula} and \eqref{eqn:unitarity_2norm} (below) and do a direct computation. The average error rate is given by
\begin{align*}
	r_{\text{CD2}} = \frac{1}{10} \bigl[ & 4 (2 p-1) \cos (2 \delta ) \cos (2 \epsilon ) -(1-2 p)^2 \cos (4 \delta )+4p(1-p)+5\bigr] \,,
\end{align*}
and the unitarity is given by
\begin{align*}
	u_{\text{CD2}} = \tfrac{1}{15}\bigl([8p(1-p)-4]^2-1\bigr) \,.
\end{align*}
Here for simplicity we have choosen $\delta_1 = \delta_2 =\delta$. This computation is routine, so we omit the details.

\subsection{The unitarity as a witness for unfavorable scaling}

The key message of this work is that the diamond distance $D(\E)$ of an error channel $\E$ may be proportional to the square root of its average error rate $r (\E)$. 
This is undesirable, since it underlines that $D(\E)$ -- which is the crucial number for fault tolerance -- may be orders of magnitude larger than $r (\E)$ -- a quantity that is routinely estimated via randomized benchmarking techniques. 
However, in our case studies we have found that for many channels this worst case behavior does not occur and there is a linear relationship $D(\E) = \mathcal{O} \left( r (\E) \right)$. 
In this section, we provide a necessary and sufficient criterion for such a desirable relationship. It is based on the \emph{unitarity}, a scalar that was introduced in~\cite{Wallman2015} 
and quantifies the coherence (i.e.\ the ``unitarity'') of a given noise channel $\E$.
To properly define it, we associate $\E$ with a reduced map $\E'$ that obeys
$\E' (I) = 0$ as well as
$\E' (X) = \E (X) - \frac{ \tr \left( \E (X) \right)}{\sqrt{d}} I$ for every traceless $X$. 
We define the unitarity of $\E$ to be the following averaged quantity of the reduced map $\E'$:
\begin{equation}
\label{eq:unitarity}
	u \left( \E \right) := \frac{d}{d-1} \int \mathrm{d} \psi\, \tr \Bigl( \E' ( | \psi \rangle \langle \psi | )^\dagger \E' ( | \psi \rangle \langle \psi | ) \Bigr).
\end{equation}
Defined that way, the unitarity obeys $u (\I) = 1$ and its definition in terms of $\E'$ makes it sensitive towards possible
non-unital and trace decreasing features of $\E$. In particular, it is also \emph{insensitive} to unitary rotations, in the sense that if $\mathcal{U}$ and $\mathcal{V}$ are unitary quantum channels, then 
$u \left( \mathcal{U} \E \mathcal{V} \right) = u \left (\E \right)$ 
holds true for any quantum channel $\E$.
As a result, the unitarity is independent of unitary pre- and post-rotations on the noise~\cite{Wallman2015}. 
The unitarity boasts many other desirable properties and -- perhaps most importantly -- can be efficiently estimated via a modified randomized benchmarking experiment~\cite{Wallman2015}. 
Moreover, it is related to the average error rate by means of the following inequality.

\begin{proposition} \label{prop:unitarity_vs_rate}
Let $\E$ be a not necessarily trace preserving quantum operation obeying  $\tr \bigl(\E ( I ) \bigr) \leq \tr (I)$.
Then the unitarity and average error rate of $\E$ obey
\begin{equation}
u( \E ) \geq \biggl( 1 - \frac{d r( \E )}{d-1} \biggr)^2, \label{eqn:unitarity1}
\end{equation}
where $d$ denotes the dimension of the system.
\end{proposition}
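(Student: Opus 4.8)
The plan is to reduce both the unitarity and the average error rate to a single object—the ``unital block'' of the channel in an orthonormal operator basis—and then to connect them through a Cauchy--Schwarz inequality. Concretely, I would fix an orthonormal Hermitian operator basis $\{P_0, P_1, \dots, P_{d^2-1}\}$ with respect to the trace inner product, where $P_0 = I/\sqrt{d}$ is the normalized identity and the remaining $P_k$ are traceless. In this basis the only part of $\E$ that enters the unitarity is the $(d^2-1)\times(d^2-1)$ matrix $M$ with entries $M_{jk} = \tr\bigl(P_j \E(P_k)\bigr)$ for $j,k \geq 1$, since the reduced map $\E'$ projects out the identity component of both its input and its output. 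The first task is therefore to re-express $u(\E)$ and $r(\E)$ in terms of $M$.

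For the unitarity I would write an arbitrary pure state as $\psi = \proj{\psi} = I/d + \hat{\psi}$ with $\hat{\psi}$ traceless, so that $\E'(\psi) = \E'(\hat{\psi})$ is exactly the traceless part of $\E(\hat{\psi})$ and has coordinate vector $M a$, where $a_k = \tr(P_k \psi)$. Averaging over Haar-random $\psi$ and using the two-design identity $\int \diff\psi\, \tr(P_j \psi)\tr(P_k \psi) = \delta_{jk}/[d(d+1)]$ (valid for traceless $P_j,P_k$) then collapses the second moment to the Frobenius norm, yielding the clean identity $u(\E) = \|M\|_F^2/(d^2-1)$. For the average error rate I would invoke \autoref{prop:infid_formula}: expanding $\tr[L(\E)] = \tfrac{1}{d}\tr[\E(I)] + \tr(M)$ in the same basis gives $\tr(M) = d(d+1)\bigl(1-r(\E)\bigr) - \tfrac{d+1}{d}\tr[\E(I)]$, tying the trace of the unital block directly to $r$.

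The final step combines these expressions. The hypothesis $\tr[\E(I)] \leq \tr(I) = d$ converts the formula for $\tr(M)$ into the lower bound $\tr(M) \geq (d^2-1)\bigl(1 - \tfrac{dr}{d-1}\bigr)$. Cauchy--Schwarz applied to $\tr(M) = \langle I_{d^2-1}, M\rangle$ gives $\tr(M)^2 \leq (d^2-1)\|M\|_F^2 = (d^2-1)^2 u(\E)$, hence $u(\E) \geq \tr(M)^2/(d^2-1)^2$. Provided $1 - \tfrac{dr}{d-1} \geq 0$—the regime relevant for fault tolerance—the two bounds chain together to give $u(\E) \geq \bigl(1 - \tfrac{dr}{d-1}\bigr)^2$, as claimed. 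I expect the main obstacle to be the careful bookkeeping in the unitarity computation: one must confirm that the non-trace-preserving and non-unital features of $\E$ affect only the relation between $r$ and $\tr(M)$ (through $\tr[\E(I)]$) and leave the identity $u(\E) = \|M\|_F^2/(d^2-1)$ untouched, and one must correctly handle the squaring step, which is where both the trace hypothesis and the sign of $1 - \tfrac{dr}{d-1}$ become essential.
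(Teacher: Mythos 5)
Your proposal is correct and follows essentially the same route as the paper's own proof: both identify the unitarity with the normalized Frobenius norm of the unital block of the transfer matrix (the paper cites this identity from Wallman rather than re-deriving it from the 2-design average as you do), both extract $\tr(M) \ge (d^2-1)\bigl(1-\tfrac{d r}{d-1}\bigr)$ from \autoref{prop:infid_formula} together with the hypothesis $\tr\bigl(\E(I)\bigr) \le d$, and both finish with a Cauchy--Schwarz step (the paper phrases it as $\tr(E_{\mathrm{u}})^2 \le \|E_{\mathrm{u}}\|_1^2 \le \mathrm{rank}(E_{\mathrm{u}})\,\|E_{\mathrm{u}}\|_2^2$, which is the same inequality). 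Your explicit remark that the final squaring step requires $1-\tfrac{d r}{d-1}\ge 0$ is a precision the paper's proof silently omits, not a deviation from it.
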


This is a slightly more general version of the inequality in~\cite{Wallman2015}[Proposition 8] and we provide a new proof based on fundamental Schatten-norm inequalities below.
For now, we content ourselves with stating the main result of this section:
for a large family of error channels, nearly saturating the bound \eqref{eqn:unitarity1}
is a necessary and sufficient condition for the desirable scaling relation $D(\E) = \mathcal{O} (r (\E))$. 

\begin{theorem} \label{thm:unitarity_witness}
Let $\E$ be an arbitrary unital and trace-preserving channel.
Then the diamond distance $D(\E)$ scales linearly in the average error rate $r = r (\E)$, if and only if 
the bound \eqref{eqn:unitarity1}
is saturated up to second order in $r(\E)$, i.e.\ 
\begin{equation}
u (\E) = \left( 1 - \frac{d r}{d-1} \right)^2 + \mathcal{O} \left( r^2 \right).
\label{eqn:unitarity_scaling}
\end{equation}
\end{theorem}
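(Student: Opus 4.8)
The plan is to derive \autoref{thm:unitarity_witness} as an essentially algebraic consequence of the two-sided bound \eqref{eq:inequality} from the main text, which applies verbatim here because $\E$ is assumed unital, trace-preserving, and leakage-free. Throughout I abbreviate $w := u(\E) + \tfrac{2dr}{d-1} - 1$, so that \eqref{eq:inequality} reads $c_d \sqrt{w} \leq D(\E) \leq d^2 c_d \sqrt{w}$ with $c_d = \tfrac{1}{2}(1-1/d^2)^{1/2} > 0$ a constant depending only on the dimension. First I would record that $w$ is nonnegative and in fact bounded below by an order-$r^2$ term: by \autoref{prop:unitarity_vs_rate} we have $u(\E) \geq (1 - \tfrac{dr}{d-1})^2$, hence $w \geq \tfrac{d^2 r^2}{(d-1)^2} \geq 0$, so the square roots above are real and $w$ can never fall below its minimal $\mathcal{O}(r^2)$ value.

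The central observation is that the prefactors $c_d$ and $d^2 c_d$ in \eqref{eq:inequality} are positive constants independent of $r$ and $u$, so $D(\E)$ and $\sqrt{w}$ are comparable up to the fixed factor $d^2$. Consequently $D(\E) = \mathcal{O}(r)$ holds \emph{if and only if} $\sqrt{w} = \mathcal{O}(r)$, equivalently $w = \mathcal{O}(r^2)$. I would prove the two directions using the two sides of the inequality separately: if $w = \mathcal{O}(r^2)$ then the upper bound gives $D(\E) \leq d^2 c_d \sqrt{w} = \mathcal{O}(r)$, while if $D(\E) = \mathcal{O}(r)$ then the lower bound gives $\sqrt{w} \leq D(\E)/c_d = \mathcal{O}(r)$ and hence $w = \mathcal{O}(r^2)$.

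It then remains to identify the condition $w = \mathcal{O}(r^2)$ with the saturation statement \eqref{eqn:unitarity_scaling}. Expanding $(1 - \tfrac{dr}{d-1})^2 = 1 - \tfrac{2dr}{d-1} + \tfrac{d^2 r^2}{(d-1)^2}$ and noting that the quadratic term is itself $\mathcal{O}(r^2)$, the saturation condition $u(\E) = (1 - \tfrac{dr}{d-1})^2 + \mathcal{O}(r^2)$ is equivalent to $u(\E) = 1 - \tfrac{2dr}{d-1} + \mathcal{O}(r^2)$, which is precisely $w = \mathcal{O}(r^2)$. Chaining the two equivalences yields the theorem.

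I expect the genuinely substantive work to lie entirely in \eqref{eq:inequality} itself --- the SDP-based sandwiching of $D$ between fixed multiples of $\sqrt{w}$ --- rather than in the present argument, which is a routine asymptotic unpacking once that bound is granted. The only points requiring care are verifying that the hypotheses of \eqref{eq:inequality} and \autoref{prop:unitarity_vs_rate} are met (unitality, trace preservation, absence of leakage) and interpreting ``linear scaling'' correctly: since every unital channel satisfies $D(\E) \geq \tfrac{d+1}{d} r$, the condition $D(\E) = \mathcal{O}(r)$ is in fact the statement $D(\E) = \Theta(r)$, so there is no ambiguity between a one-sided and a two-sided linear bound.
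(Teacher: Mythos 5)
Your proposal is correct and takes essentially the same route as the paper: the paper likewise deduces \autoref{thm:unitarity_witness} directly from the two-sided bound of \autoref{prop:unitarity_bound} (Inequality~\eqref{eq:inequality} of the main text), inserting the saturation condition into the upper bound for one direction and invoking the lower bound for the converse. Your packaging of the argument via the single quantity $w = u + \tfrac{2dr}{d-1} - 1$ and the equivalence $D(\E) = \mathcal{O}(r) \Leftrightarrow w = \mathcal{O}(r^2)$ is, if anything, slightly cleaner than the paper's converse, which is phrased as a proof by contradiction under the looser hypothesis $u = \bigl(1 - \tfrac{dr}{d-1}\bigr)^2 + \mathcal{O}(r)$.
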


Since both $r (\E)$ and $u (\E)$ can be efficiently estimated in actual experiments, 
\autoref{thm:unitarity_witness} provides an efficient means to check
whether or not $D(\E)$ and $r(\E)$ are of the same magnitude. 
It immediately follows from the following technical result. 

\begin{proposition} \label{prop:unitarity_bound}
Let $\E$ be a unital and trace-preserving quantum operation. 
Then $D := D(\E)$, $r := r(\E)$ and $u := u (\E)$ are related via
\begin{equation}
c_d \sqrt{ u  + \frac{2 d r}{d-1}-1 } 
\leq D
\leq d^2 c_d \sqrt{u  + \frac{2dr}{d-1} -1 }, \label{eqn:unitarity_estimate}
\end{equation}
where $c_d = \tfrac{1}{2}\left(1-\tfrac{1}{d^2}\right)^{1/2} \in \left[ \frac{\sqrt{3}}{4}, \frac{1}{2} \right]$ that only depends on the system dimension $d$.
\end{proposition}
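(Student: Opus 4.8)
The plan is to reduce everything to the Hilbert--Schmidt norm of the Choi matrix $J(\Delta)$ of the difference channel $\Delta = \E - \I$, and then sandwich the diamond distance between two multiples of $\|J(\Delta)\|_2$. The heart of the argument is the algebraic identity
\begin{equation*}
\|J(\Delta)\|_2^2 = (d^2-1)\Bigl(u + \tfrac{2dr}{d-1} - 1\Bigr),
\end{equation*}
after which the two-sided bound follows from elementary Schatten-norm comparisons.

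First I would establish this identity. Since the Choi and Liouville representations are related by a reshuffling that preserves the Hilbert--Schmidt norm, $\|J(\Delta)\|_2 = \|L(\Delta)\|_2 = \|L(\E) - L(\I)\|_2$, and $L(\I)$ is the $d^2\times d^2$ identity. Working in an orthonormal Hermitian operator basis $\{\sigma_0 = I/\sqrt{d}, \sigma_1, \dots, \sigma_{d^2-1}\}$ with $\sigma_{i\ge 1}$ traceless, unitality and trace preservation force $L(\E)$ into the block form $L_{00}=1$, $L_{0j}=L_{i0}=0$, with a single nontrivial $(d^2-1)\times(d^2-1)$ block $\hat L$ on the traceless subspace. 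Hence $\|L(\Delta)\|_2^2 = \|\hat L - I\|_2^2 = \|\hat L\|_2^2 - 2\tr(\hat L) + (d^2-1)$, using that $\hat L$ is real by Hermiticity preservation. Two inputs finish the computation: the Hilbert--Schmidt form of the unitarity, $u = \|\hat L\|_2^2/(d^2-1)$, which follows from Definition~\eqref{eq:unitarity} by a routine $2$-design average over $\psi$; and \autoref{prop:infid_formula}, which with $\tr(\E(I))=d$ gives $\tr(L(\E)) = 1 + \tr(\hat L) = d^2 - d(d+1)r$, i.e.\ $\tr(\hat L) = d^2 - 1 - d(d+1)r$. Substituting both and simplifying $2d(d+1)/(d^2-1) = 2d/(d-1)$ yields the identity.

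Next I would assemble the norm sandwich. For the lower bound, evaluating the diamond norm at the maximally entangled input already gives $\dnorm{\Delta} \ge \|(\I\otimes\Delta)(\proj{\psi_{\mathrm{Bell}}})\|_1 = \tfrac{1}{d}\|J(\Delta)\|_1 \ge \tfrac{1}{d}\|J(\Delta)\|_2$, so $D \ge \tfrac{1}{2d}\|J(\Delta)\|_2$. For the upper bound I would use the Magesan--Gambetta--Emerson dual-feasible point $Z = \Pi_+ J(\Delta)\Pi_+$ in the dual SDP~\eqref{eq:diamond_dual}: because $\Delta$ is trace-annihilating, $J(\Delta)$ is traceless, so $\tr(Z) = \tfrac{1}{2}\|J(\Delta)\|_1$, and $D \le \|\tr_B(Z)\|_\infty \le \tr(Z) = \tfrac{1}{2}\|J(\Delta)\|_1 \le \tfrac{d}{2}\|J(\Delta)\|_2$, the last step by Cauchy--Schwarz on the singular values together with $\mathrm{rank}\,J(\Delta)\le d^2$. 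Inserting the identity and recognizing $c_d = \sqrt{d^2-1}/(2d)$ turns $\tfrac{1}{2d}\|J(\Delta)\|_2$ into $c_d\sqrt{\cdots}$ and $\tfrac{d}{2}\|J(\Delta)\|_2$ into $d^2 c_d\sqrt{\cdots}$, giving exactly \eqref{eqn:unitarity_estimate}.

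The Schatten-norm inequalities and the SDP feasibility checks are standard, so the main obstacle is the bookkeeping in the central identity: one must keep several competing normalizations consistent --- the factor $d$ in the definition of $J$, the $d/(d-1)$ prefactor in the unitarity, and the $\tr(\E(I))=d$ convention entering \autoref{prop:infid_formula} --- and correctly carry out the $2$-design average that converts Definition~\eqref{eq:unitarity} into $\|\hat L\|_2^2/(d^2-1)$. Once that identity is secured, the two inequalities are immediate.
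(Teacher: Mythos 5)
Your proof is correct and follows essentially the same route as the paper's: both reduce the problem to the single identity $\|J(\Delta)\|_2^2 = (d^2-1)\bigl(u + \tfrac{2dr}{d-1} - 1\bigr)$ (using the block structure of $L(\mathcal{E})$ forced by unitality and trace preservation, the relation $u = \|E_{\mathrm{u}}\|_2^2/(d^2-1)$, and Proposition~\ref{prop:infid_formula} with $\tr(\mathcal{E}(I))=d$), and then sandwich $D$ between $\tfrac{1}{2d}\|J(\Delta)\|_2$ and $\tfrac{d}{2}\|J(\Delta)\|_2$ via $\tfrac{1}{d}\|J(\Delta)\|_1 \le \|\Delta\|_\diamond \le \|J(\Delta)\|_1$ together with the rank-$d^2$ Cauchy--Schwarz bound $\|J(\Delta)\|_1 \le d\|J(\Delta)\|_2$. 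The only differences are cosmetic: where the paper cites known results (the $\|J\|_1$ sandwich from Lemma~7 of Ref.~\cite{Wallman2014} and the unitarity--Choi relation from Ref.~\cite{Wallman2015}), you re-derive them from scratch --- the lower bound via the maximally entangled input, the upper bound via the Magesan--Gambetta--Emerson dual-feasible point $Z=\Pi_+J(\Delta)\Pi_+$, and the central identity in the Liouville picture using reshuffling invariance of the Hilbert--Schmidt norm rather than by expanding $\|d\,|\psi_{\mathrm{Bell}}\rangle\langle\psi_{\mathrm{Bell}}| - J(\mathcal{E})\|_2^2$ in the Choi picture --- so your argument is more self-contained but not substantively different.
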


To deduce \autoref{thm:unitarity_witness} from this statement, let us start 
 with assuming that \eqref{eqn:unitarity_scaling} holds. 
Inserting this expression for $u$ into the upper bound provided by \autoref{prop:unitarity_bound} yields
\begin{equation*}
 D
\leq  d^2 c_d \sqrt{ \left( 1 - \frac{d r}{d-1}\right)^2 + \mathcal{O}(r^2) 
+ \frac{2d r}{d-1} - 1} 
= d^2 c_d \sqrt{ \frac{d^2}{(d-1)^2} r^2 + \mathcal{O} \left( r^2 \right)}
= \mathcal{O} \left( r \right),
\end{equation*}
as claimed. 
Conversely, suppose by contradiction that $u = \left( 1 - \frac{d r }{d-1} \right)^2 + \mathcal{O}(r)$. 
Employing the lower bound provided by \autoref{prop:unitarity_bound} in a similar fashion assures $D (\E) = \mathcal{O} (\sqrt{r} )$ which definitely does not scale linearly in $r$.

In order to establish the remaining statements -- \autoref{prop:unitarity_bound} and  \autoref{prop:unitarity_vs_rate} -- it is very useful to choose a particular Liouville representation of 
error channels $\E$. Concretely, we let $\left\{B_1,\ldots,B_{d^2} \right\}$
be a unitary operator basis obeying $B_1 = \frac{1}{\sqrt{d}}I$ and $\tr \left( B_i^\dagger B_j \right) = \delta_{i,j}$ (e.g.\ the normalized Pauli's with the identity as first element). 
If defined with respect to such a basis, $L(\E)$ admits the following block structure
\begin{equation}
L(\E) =
\left(
\begin{array}{cc}
\frac{1}{d} \tr \left( \E (I) \right) & e_{\mathrm{sdl}} \\
e_{\mathrm{nu}} & E_\mathrm{u}
\end{array}
\right), \label{eqn:block_structure}
\end{equation}
where $e_{\mathrm{sdl}},e_{\mathrm{nu}} \in \mathbb{C}^{d^2-1}$ encapsulate state dependent leakage and non-unitarity, respectively. 
With such a Liouville representation, the unitarity of $\E$ is proportional to the squared Frobenius  (or Hilbert-Schmidt) norm of the unital block $E_u$~\cite{Wallman2015}[Proposition 1]:
\begin{equation}
u (\E) = \frac{1}{d^2-1} \left\| E_{\mathrm u} \right\|_2^2 .
\label{eqn:unitarity_2norm}
\end{equation} 
Moreover, such a block-matrix structure lets us establish the following relation~\cite{Wallman2015}[Proposition 9]
\begin{equation}
\| J (\E) \|_2^2 = (d^2+1) u (\E) + \| e_{\mathrm{nu}} \|_{\ell_2}^2 + \| e_{\mathrm{sdl}} \|_{\ell_2}^2 + \frac{1}{d} \tr \left( \E (I) \right). \label{eqn:unitarity_choi}
\end{equation}
between the unitarity and the channel's associated Choi matrix.
Having laid out these relations, we are ready to prove the main technical result of this section. 

\begin{proof}[Proof of \autoref{prop:unitarity_bound}]
We start with pointing out that the statement's assumptions assure that both $e_{\mathrm{nu}}$
and $e_{\mathrm{sdl}}$ vanish. This considerably simplifies the block structure \eqref{eqn:block_structure} of $L( \E)$ as well as relation \eqref{eqn:unitarity_choi}.
At the heart of this statement is an inequality that relates the diamond norm of any map $\mathcal{M}$ to different Schatten-norms of its corresponding Choi matrix:
\begin{equation}
\frac{1}{d}\| J (\mathcal{M}) \|_1 \leq \| \mathcal{M} \|_\diamond \leq  \| J(\mathcal{M} ) \|_1,
\label{eq:diamond_relation}
\end{equation}
see e.g.~\cite{Wallman2014}[Lemma 7].
Recalling $D(\E) = \frac{1}{2} \| \Delta \|_\diamond$ and weakening this estimate by employing the Schatten norm inequalities $\| X \|_2 \leq \| X \|_1 \leq \mathrm{rank}(X) \| X \|_2$ allows us to deduce
\begin{equation}
\frac{1}{2d} \| J(\Delta) \|_2 \leq D (\E) \leq \frac{d}{2} \| J(\Delta) \|_2,
\label{eqn:unitarity_bound_aux0}
\end{equation}
because $J(\Delta)$ has at most rank $d^2$.
Note that an analogous relation can be derived using the diamond norm bound presented in~\cite{Kliesch2015} instead of \eqref{eq:diamond_relation}.
As a matter of fact, the assumptions on $\E$ allow us to calculate 
$\| J(\Delta) \|_2$ explicitly.
To do so, start with
\begin{equation}
\| J (\Delta) \|_2^2 = \| J (\I - \E) \|_2^2
= \| d | \psi_{\mathrm{Bell}} \rangle \langle \psi_{\mathrm{Bell}} | - J(\E) \|_2^2 
= d^2 \langle \psi_{\mathrm{Bell}} , \psi_{\mathrm{Bell}} \rangle^2
- 2 d \langle \psi_{\mathrm{Bell}} | J (\E) | \psi_{\mathrm{Bell}} \rangle + \| J (\E) \|_2^2
\label{eqn:unitarity_bound_aux2}
\end{equation}
and note that the second term is related to the average error rate via 
\begin{equation*}
\langle \psi_{\mathrm{Bell}}| J (\E) | \psi_{\mathrm{Bell}} \rangle 
= (d+1) F_{\mathrm{avg}}(\E) -1 = (d+1)(1- r (\E)) -1.
\end{equation*}
This can readily be deduced from \eqref{eqn:infid_formula} by inserting the identity 
$\tr \left( L (\E) \right) = d \langle \psi_{\mathrm{Bell}} | J (\E) | \psi_{\mathrm{Bell}} \rangle$ and
noting that
$\tr \left( \E (I) \right) = \tr (I) = d$ holds, because $\mathcal{E}$ is trace-preserving. 
In turn, equation \eqref{eqn:unitarity_choi} allows to replace the last term in Eq.~\eqref{eqn:unitarity_bound_aux2} by
\begin{equation*}
\| J (\E) \|_2^2 = (d^2-1) u(\E) + \frac{1}{d^2} \tr \left( \E (I) \right)^2+\| e_{\mathrm{sdl}} \|_{\ell_2}^2 + \| e_{\mathrm{n}} \|_{\ell_2}^2 
= (d^2-1)u(\E) + 1,
\end{equation*}
where we have used our assumptions that $\E$ is both unital and trace preserving to considerably simplify this expression.
Inserting these identities into Eq.~\eqref{eqn:unitarity_bound_aux2} reveals
\begin{align*}
\| J (\Delta) \|_2^2
=& d^2 - 2d(d+1)(1- r(\E)) +2d+ (d^2-1) u(\E) + 1 \\
=& (d^2-1)u(\E) + 2 d (d+1) r (\E) + - d^2 + 1 \\
=& (d^2 -1) \left( u(\E) + \frac{2 d r (\E)}{d-1} - 1 \right).
\end{align*}
Plugging this explicit expression into the inequality chain \autoref{eqn:unitarity_bound_aux0}
then establishes the claim.
\end{proof}

Finally, we provide a proof of Propostion \ref{prop:unitarity_vs_rate}.

\begin{proof}[Proof of \autoref{prop:unitarity_vs_rate}]
The claim can be deduced from the fundamental norm inequality $\| X \|_1^2 \leq \mathrm{rank} (X) \| X \|_2^2$.
Now, let $L(\E)$ be the particular block matrix representation \eqref{eqn:block_structure}.
By construction $E_\mathrm u$ has rank at most $(d^2-1)$ and we infer that
\begin{equation}
\tr ( E_{\mathrm{u}} )^2 
 \leq \| E_{\mathrm u} \|_1^2 \leq \mathrm{rank} (E_\mathrm u) \| E_\mathrm u \|_2^2
= (d^2-1)^2 u (\E)
\label{eqn:unitarity_vs_rate_aux1}
\end{equation}
must hold, where we have employed Eq.~\eqref{eqn:unitarity_2norm}. 
Also, Formula~\eqref{eqn:infid_formula} together with the definition of the error rate implies
\begin{equation*}
\tr ( L (\E) ) + \tr ( \E (I) ) 
= d(d+1)F_{\mathrm{avg}}(\E)
=d(d+1)(1- r ( \E ) ).
\end{equation*}
This in turn allows us to calculate
\begin{align*}
\tr ( E_\mathrm{u} )
=& \tr ( L (\E) ) - \frac{1}{d} \tr ( \E (I) ) 
= \tr ( L (\E) ) + \tr ( \E (I) ) - \frac{d+1}{d} \tr ( \E (I) ) \\
=& d(d+1)(1- r (\E)) - \frac{d+1}{d} \tr ( \E (I) ) 
\geq  d(d+1)(1- r(\E) ) - (d+1) \\
=& d(d+1) \biggl( \frac{d-1}{d} - r (\E) \biggr),
\end{align*}
and combining this estimate with \eqref{eqn:unitarity_vs_rate_aux1} readily yields the claimed bound.
\end{proof}

\end{document}